\numberwithin{equation}{section}
\theoremstyle{theorem}
\newtheorem{theorem}{Theorem}[section]
\newtheorem{lemma}[theorem]{Lemma}
\newtheorem{proposition}[theorem]{Proposition}
\newtheorem{definition}[theorem]{Definition}
\newtheorem{remark}[theorem]{Remark}
\newtheorem{example}[theorem]{Example}
\def\sp{\hskip -5pt}
\def\spa{\hskip -3pt}
\def\bearray{\begin{eqnarray}}
\def\earray{\end{eqnarray}}
\def\beq{\begin{equation}}
\def\eeq{\end{equation}}
\def\b0{{\bf 0}}
\def\mpasto{\mapsto}
\def\bC{{\mathbb C}}           
\def\bN{{\mathbb N}}
\def\bR{{\mathbb R}}
\begin{document} 

\par
\bigskip
\large
\noindent
{\bf Bulk-boundary asymptotic equivalence of two strict deformation quantizations}
\bigskip
\par
\rm
\normalsize

\noindent  {\bf Valter Moretti$^{a}$$^1$}, {\bf Christiaan J.F.  van de Ven$^{a}$$^2$}\\
\par

\noindent 
 $^a$ Department of  Mathematics, University of Trento, and INFN-TIFPA \\
 via Sommarive 14, I-38123  Povo (Trento), Italy\\

\smallskip
\noindent $^1$ Email: valter.moretti@unitn.it,  $^2$ Marie Sk\l odowska-Curie Fellow of the Istituto Nazionale di Alta Matematica. Email:
christiaan.vandeven@unitn.it\\
 \normalsize

\par

\rm\normalsize

\spa\spa\sp\sp September 21 2020

\rm\normalsize


\par
\bigskip
\noindent
\small
{\bf Abstract}.  
The existence of a strict deformation quantization of  $X_k=S(M_k({\mathbb{C}}))$, the state space of the $k\times k$ matrices $M_k({\mathbb{C}})$ which  is canonically a compact Poisson manifold (with stratified boundary)  has recently been proven by both authors and K. Landsman \cite{LMV}.
In fact, since increasing tensor powers of the $k\times k$ matrices $M_k({\mathbb{C}})$ are known to give rise to a continuous bundle of $C^*$-algebras over 
$I=\{0\}\cup 1/\mathbb{N}\subset[0,1]$ with fibers $A_{1/N}=M_k({\mathbb{C}})^{\otimes N}$ and $A_0=C(X_k)$,  we were able to define a strict deformation quantization of $X_k$ \`{a} la Rieffel, specified by quantization maps $Q_{1/N}: \tilde{A}_0\rightarrow A_{1/N}$, with $\tilde{A}_0$ a  dense Poisson subalgebra of $A_0$. A similar result is known for the symplectic manifold $S^2\subset\mathbb{R}^3$, for which in this case the fibers $A'_{1/N}=M_{N+1}(\mathbb{C})\cong B(\text{Sym}^N(\mathbb{C}^2))$ and $A_0'=C(S^2)$ form a continuous bundle of $C^*$-algebras over the same base space $I$, and where quantization is specified by (a priori different) quantization maps $Q_{1/N}': \tilde{A}_0' \rightarrow A_{1/N}'$. In this paper we focus on the particular case  $X_2\cong B^3$ (i.e the unit three-ball in $\mathbb{R}^3$) and show that for any function $f\in \tilde{A}_0$ one has $\lim_{N\to\infty}||(Q_{1/N}(f))|_{\text{Sym}^N(\mathbb{C}^2)}-Q_{1/N}'(f|_{_{S^2}})||_N=0$, were $\text{Sym}^N(\mathbb{C}^2)$ denotes the symmetric subspace of  $(\mathbb{C}^2)^{N \otimes}$. Finally, we give an application regarding the (quantum) Curie-Weiss model.
\normalsize

\tableofcontents

\section{Introduction}
An important field of research within mathemetical physics concerns the relation between classical theories viewed as limits of quantum theories.  For example, classical mechanics of a particle on the phase space $\mathbb{R}^{2n}$ versus quantum mechanics on the Hilbert space $L^2(\mathbb{R}^n)$, or classical thermodynamics of  a spin system versus statistical mechanics of a quantum spin system on a finite lattice \cite{Lan17}.  In these examples the relation between both (different) theories can be described by a {\em continuous bundle} of  algebras of observables  equipped with certain quantization maps. A modern way establishing a link  between both theories is based on the concept of {\em strict deformation quantization}, i.e., the mathematical formalism that describes the transition from a classical theory to a quantum theory \cite{Rie89,Rie94,Lan98} in terms of  deformations of (commutative) Poisson algebras (representing the classical theory) into non-commutative $C^*$  algebras characterizing the quantum theory.  
\subsection{Strict deformation quantization maps}
Let us focus to the first known  example starting from the familiar classical phase space  $\mathbb{R}^{2n}$. For convenience, we only consider the Poisson algebra of smooth compactly supported functions $f\in C_c^{\infty}(\mathbb{R}^{2n})$ where the Poisson structure is the one associated to the natural symplectic form $\sum_{j=1}^n dp_j \wedge dq^j$. In order to relate $C_c^{\infty}(\mathbb{R}^{2n})$  to a quantum theory described on some Hilbert space, one needs to deform $C_c^{\infty}(\mathbb{R}^{2n})$ into non-commutatative $C^*$-algebras exploiting  a family of {\em quantization maps}. Berezin proposed the quantization maps \cite{Ber}
\begin{align}
Q_{\hbar}:  C_c^{\infty}(\mathbb{R}^{2n}) &\to B_0(L^2(\mathbb{R}^n));\\
Q_{\hbar}(f)&=\int_{\mathbb{R}^{2n}} \frac{d^npd^nq}{(2\pi\hbar)^n} f(p,q) |\phi_{\hbar}^{(p,q)}\rangle \langle\phi_{\hbar}^{(p,q)}|,
\end{align}
 where  $\hbar\in (0,1]$; $B_0(H)$ is the $C^*$-algebra of compact operators on the Hilbert space $H=L^2(\bR^n)$,   and for each point $(p,q)\in \bR^{2n}$ the (projection) operator $|\phi_{\hbar}^{(p,q)}\rangle \langle \phi_{\hbar}^{(p,q)}|:L^2(\bR^n)\to L^2(\bR^n)$ is induced by the normalized wavefunctions, where $x\in \mathbb{R}^n$,
\begin{equation}
\phi_{\hbar}^{(p,q)}(x)=(\pi\hbar)^{-n/4}e^{-ipq/2\hbar}e^{-ipx/\hbar}e^{-(x-q)^2/2\hbar}\:, \quad \phi_{\hbar}^{(p,q)}\in L^2(\mathbb{R}^n),
\end{equation}
defining the well-known  (Schrodinger) {\em coherent states}. Inspired by Dixmier's concept of a {\em continuous bundle} \cite{Dix}, Rieffel  showed that \cite{Rie89,Rie94} 
\begin{enumerate}
\item The fibers $A_0= C_0(\bR^{2n})$ and  $A_{\hbar}= B_0(L^2(\bR^n))$, $h\in(0,1]$, can be combined into a (locally non-trivial) continuous bundle $A$ of $C^*$-algebras over $I=[0,1]$;
\item  $\tilde{A}_0=C^{\infty}_c(\bR^{2n})$ is a  dense Poisson subalgebra of $A_0$.
 \item  Each quantization map  $Q_{\hbar}:\tilde{A}_0\to A_{\hbar}$ is linear, and if we also define $Q_0: \tilde{A}_0 \hookrightarrow A_0$ as the inclusion  map, then the ensuing family $Q = (Q_\hbar)_{\hbar \in I}$
 satisfies:
  \begin{enumerate}
\item  Each map $Q_{\hbar}$ is self-adjoint, i.e.\ $Q_{\hbar}(\overline{f})=Q_{\hbar}(f)^*$ (where $f^*(x)=\overline{f(x)}$). 
\item  For each $f\in \tilde{A}_0$ the following cross-section of the bundle is continuous:
\begin{align}
    &0\to f;\\
    &\hbar\to Q_{\hbar}(f) \ \ \  (\hbar\in I\backslash\{0\})).
\end{align}
\item  Each pair $f,g\in\tilde{A}_0$ satisfies the {\bf Dirac-Groenewold-Rieffel condition}:
\begin{align}
    \lim_{\hbar\to 0}\left|\left|\frac{i}{\hbar}[Q_{\hbar}(f),Q_{\hbar}(g)]-Q_{\hbar}(\{f,g\})\right|\right|_{\hbar}=0.\label{Diracgroenewold}
\end{align} 
\end{enumerate}
\end{enumerate}
This led to the general concept of a {\em strict deformation of a Poisson manifold} $X$ \cite{Rie89,Lan98}, which we here state in the case of interest to us in which $X$ is compact, or more generally in which $X$ is a manifold with stratified boundary \cite{LMV,Pflaum}. In that case  the space $I$ in which $\hbar$ takes values cannot be all of $[0,1]$, but should be a subspace $I\subset [0,1]$ thereof that at least contains 0 as an accumulation point. This is assumed in what follows. Furthermore, the  Poisson bracket on $X$ is denoted, as usual, by $\{\cdot, \cdot\}: C^\infty(X)\times C^\infty(X) \to C^\infty(X)$ (where the smooth space $C^\infty(X)$ is suitably defined when $X$ is a more complicated object thatn a compact smooth manifold as we shall say shortly). 
 \begin{definition}\label{def:deformationq}
A {\bf strict deformation quantization} (according to \cite{Lan17}\footnote{
We stress that the some authors adopt a different notion of strict (deformation) quantization. For example, in Rieffel's approach  the same (quantum) algebra is used and the product, depending on $\hbar$, is deformed \cite{Rie89,Rie94}. In this setting, the image of the quantization map lies in an algebra with a 'new' product. We instead follow the definitions introduced by Landsman \cite{Lan98,Lan17} (who adapted Rieffel's original ideas), where for $\hbar>0$ he uses the non-commutative algebras with their intrinsic product independent of $\hbar$. The $\hbar$-dependence in turn is put in the quantization map itself. The term 'strict deformation quantization' we use in this paper is therefore related to Landsman's notion of quantization.
}) of a compact Poisson manifold $X$  consists of an index space  $I\subset [0,1]$, including $0$ as accumulation point,  for $\hbar$ as detailed above, as well as:
 \begin{itemize}
\item  A continuous
 bundle of unital $C^*$-algebras $(A_{\hbar})_{\hbar\in I}$ over $I$  with $A_0=C(X)$ equipped with the standard commutative  $C^*$-algebra structure with respect to the norm $\|\cdot\|_\infty$;
 \item A $\|\cdot\|_\infty$-dense Poisson suabalgebra  $\tilde{A}_0 \subseteq C^\infty(X) \subset A_0$ (on which $\{\cdot, \cdot\}$ is defined);
 \item  A  family $Q = (Q_\hbar)_{\hbar \in I}$ of  linear maps $Q_{\hbar}:\tilde{A}_0\to A_{\hbar}$ indexed by $\hbar\in I$ (called 
  {\bf quantization maps})  such that $Q_0$ is the inclusion map $\tilde{A}_0 \hookrightarrow A_0$, and the above
  conditions (a) - (c) hold, as well as  $Q_{\hbar}(\mathrm{1}_X)=\mathrm{1}_{A_{\hbar}}$ (the unit of $A_{\hbar}$).
\end{itemize}
\end{definition}

It follows from the definition of a continuous bundle of $C^*$-algebras that two continuity properties holds
\begin{align} \lim_{\hbar\to 0}\|Q_{\hbar}(f)\|_\hbar=\|f\|_{\infty}\end{align} and 
\begin{align} \lim_{\hbar\to 0}\|Q_{\hbar}(f)Q_{\hbar}(g)-Q_{\hbar}(fg)\|_\hbar=0\end{align}
hold automatically \cite{Dix,Lan98,Lan17}. 
\subsection{Spin systems and generalizations}
{\em Mean-field quantum spin systems}\footnote{A typical example of a mean field quantum spin system is the Curie--Weiss model (see for example \cite{ABN,CCIL,IL,Weyl,VGRL18,vandeVen} and references therein).} fit into this framework. There,  the index set $I$ is given by ($0\notin\bN:=\{1,2,3,\ldots\}$)
\begin{equation}
I= \{1/N \:|\: N \in  \mathbb{N}\} \cup\{0 \}\equiv (1/\bN) \cup\{0 \}, \label{defI}\
\end{equation}
with the  topology inherited from $[0,1]$.  That is, we  put $\hbar=1/N$, where $N\in\mathbb{N}$ is interpreted as  the number of sites of the model; our interest is the limit $N\to\infty$. In the framework of $C^*$-algebraic quantization theory, the analogy between the ``classical" limit $\hbar\to 0$ in typical examples from mechanics (see e.g. our first example \cite{Hel}) and the ``thermodynamic" limit $N\to\infty$ in typical quantum spin systems (see e.g. \cite{Lieb,LMV}) is developed in detail in \cite{Lan17}. We remark that the limit $N\to\infty$ can be taken in two entirely
different ways, which depends on the class of observables one considers, namely either {\em quasi-local} observables 
 or {\em macroscopic} observables. The former are the ones traditionally studied for quantum spin systems, but the latter relate these systems to strict deformation quantization, since macroscopic observables are precisely defined by (quasi-) symmetric sequences which form the continuous cross sections of a continuous bundle of $C^*$-algebras.  This continuous bundle of $C^*$-algebras is defined over base space $I$ given by \eqref{defI} with fibers
 \begin{align}
 A_0&=C(S(M_k(\bC))\equiv C(X_k); \label{B0}\\
 A_{1/N}&= M_k(\bC)^{\otimes N}\cong M_{k^N}(\bC) \label{BN},
   \end{align}
and continuity structure specified by continuous cross- sections which are thus given by all quasi-symmetric sequences \cite{LMV} \cite[Ch.10]{Lan17}.\footnote{The same result holds for an arbitary unital $C^*$-algebra $B$ playing the role of  the matrix algebra $M_k(\bC)$ in the above setting \cite{Lan17}.} We refer to the appendix for some useful definitions, or to \cite{LMV} for a more comprehensive explanation. The space $X_k=S(M_k(\bC))  \subset \mathbb{R}^{k^2-1}$ has the structure of a  compact Poisson manifold with {\em stratified} boundary. The space $C^\infty(X_k)$ is here made of the  restrictions to $X_k$ of the smooth functions in $\mathbb{R}^{k^2-1}$
and the Poisson bracket is the restriction ${\bf x} \in X_k$
\begin{align}\label{pbra}
\{f,g\}({\bf x}) = \sum_{a,b,c =1}^{k^2-1}C^c_{ab} x_c \frac{\partial f}{\partial x_a}\frac{\partial g}{\partial x_b}\:, \quad {\bf x}\in \mathbb{R}^{k^2-1}
\end{align}
for $f,g\in C^\infty(\mathbb{R}^{k^2-1})$   and where $C^c_{ab}$ are the structure constants of $SU(k)$ (see Sect. 2.3 of \cite{LMV} for details).
In turn, the   Poisson algebra $\tilde{A}_0$ dense in $A_0 = C(X_k)$ is made of the restrictions to $X_k$ of the polynomials 
in the $k^2-1$ coordinates of $\mathbb{R}^{k^2-1}$ 

Let us pass to describe $Q_{1/N}$. Each  polynomial $p$ of degree $L$ uniquely corresponds to a  polynomial of symmetric  elementary tensors of the form $b_{j_1}\otimes_s\cdot\cdot\cdot\otimes_sb_{j_L}$, where $ib_1,\ldots, ib_{k^2-1}$ form a basis of
the Lie algebra of $SU(k)$. That is the image of $p$ according to $Q_{1/N}$.
More precisely, if $$p_L(x_1,\ldots, x_{k^2-1}) = x_{j_1} \cdots x_{j_L}\quad \mbox{where $j_1,\ldots, j_L \in   \{1,2,\ldots, k^2-1\}$,}$$ 
the quantization maps $Q_{1/N}: \Tilde{A}_0 \to M_k(\bC)^N$ act as (see the appendix for $S_{L,N}$)
\begin{align}\label{deformationqunatizaion}
 Q_{1/N}(p_L) &=
\begin{cases}
    S_{L,N}(b_{j_1}\otimes_s\cdot\cdot\cdot\otimes_sb_{j_L}), &\ \text{if} \ N\geq L \\
    0, & \ \text{if} \ N < L,
\end{cases}\\
Q_{1/N}(1) &= \underbrace{I_k \otimes \cdots \otimes I_k}_{\scriptsize N \: times}. \label{deformationqunatizaion2},
\end{align}
and more generally they are defined as  the unique continuous and linear extensions of the written maps.\\
It has been shown in  \cite{LMV} that the quantization maps $Q_{1/N}$ satisfy all the axioms of Definition \ref{def:deformationq}.\footnote{In particular the quantization maps define (quasi)-symmetric sequences, and hence macroscopic observables.} These data  together imply the existence of a strict deformation quantization of the Poisson manifold $X_k=S(M_k(\bC))$ (see \cite[Theorem 3.4]{LMV} for a detailed proof).

We specialize these models to the case $k=2$. One-dimensional quantum spin systems arising in that way  are widely studied in (condensed matter) physics, but also in mathematical physics they form an important field of research, especially in view of spontaneous symmetry breaking (SSB). 
One tries to calculate quantities like the free energy, or the entropy of the system in question and considers their thermodynamic limit as the number of sites $N$ increas to infinity  \cite{Lieb}. For this reason the case $k=2$  is already of huge interest, since each site of such a spin chain is exactly described by the algebra of $(2\times 2)$-matrices. On the other hand the Bloch sphere $S^2$ acting as a classical phase space which describes a physical system may be a spin system of total spin $j$, but it can also be a collection of $n$ two-level atoms \cite{BZ06} corresponding to a spin chain of $n$ sites, which is for example the case for the quantum Curie-Weiss model \cite{LMV}. Inspired by that model, which admits a classical limit\footnote{This means that $\langle\Psi_N^{(0)},Q_{1/N}(f)\Psi_N^{(0)}\rangle$ admits a limit as $N\to\infty$ for any function $f\in\Tilde{A}_0$, and $\Psi_N^{(0)}$ the ground state eigenvector of the quantum CW Hamiltonian (see \cite[Theorem 4.1]{LMV} for details).} on $S^2$ (i.e. the smooth boundary of $X_2=S(M_2(\bC))\cong B^3$, where $B^3$ denotes the closed unit ball in $\mathbb{R}^3$), we asked ourselves if the quantization maps $Q_{1/N}$ quantizing  $X_2$ could in general be related to another  well-known strict deformation quantization of $S^2$ whose details are explained in what follows.\footnote{Of course, one can always try to restrict $\Tilde{A}_0$ to $\Tilde{A}_0'$ but in that case the same manifolds are quantized which is not of particular new interest.}

From the mathematical side, we observe  that $k=2$ is the unique case where $X_k$ admits a smooth boundary, as said  $X_2 = B^3$ and $\partial X_2 = S^2$.
Furthermore
$S^2$ is a Poisson submanifold of $B^3$, when the latter is equipped with the Poisson structure (\ref{pbra})  specialized to $k=2$, so that $C^a_{bc}= \epsilon_{abc}$. This is because  $S^2$ (and also $B^3$) is invariant under the flow of the Hamilton vector fields of $\mathbb{R}^{k^2-1}$ constructed out of the Poisson bracket (\ref{pbra}).
For $k=2$,  we precisely have
\begin{align}
\{f, g\}^{(B^3)}|_{S^2} = \{f|_{S^2}, g|_{S^2}\}^{(S^2)} \quad \mbox{  if $f,g \in \tilde{A}_0$},\label{restPoi}
\end{align}
with obvious notation. In particular,
\begin{align}\label{pbra}
\{f,g\}^{(B^3)}({\bf x}) = \sum_{a,b,c =1}^{3}\epsilon_{abc}x_c \frac{\partial f}{\partial x_a}\frac{\partial g}{\partial x_b}\:, \quad {\bf x}\in B^{3}.\:
\end{align}

This paper only concerns the case $k=2$. In a sense, we promote at  quantum level  the illustrated interplay of the two symplectic structures of $X_2 = B^3$  and $\partial B^3=S^2$.
 As a matter of fact, we consider the quantization elements  $Q_{1/N}(f)\in B((\mathbb{C}^2)^{N \otimes})$   under the maps \eqref{deformationqunatizaion}-\eqref{deformationqunatizaion2} referred to the symplectic structure of $B^3$. Next we
 restrict the operators $Q_{1/N}(f)$ to a  suitable common invariant subspace of  $(\mathbb{C}^2)^{N \otimes}$.
It turns out that, for large $N$,   these restricted operators correspond to the image of {\em another} quantization map $Q'_{1/N}$ acting on $C(\partial B^3)$ and referring to the symplectic structure of $\partial B^3$.

 The said invariant  subspace\footnote{This space is clearly invariant under the maps \eqref{deformationqunatizaion} - \eqref{deformationqunatizaion2}.} is $\text{Sym}^N(\mathbb{C}^2)\subset (\mathbb{C}^2)^{N \otimes}$, for which the corresponding algebras $B(\text{Sym}^N(\mathbb{C}^2))$ exactly correspond to the fibers (for $N\neq 0$) of another continuous bundle of $C^*$-algebras given by \eqref{B02} - \eqref{BN2} below. It is a well-known fact that these fibers together with quantization maps \eqref{defquan3} - \eqref{defquan4} below give rise to a strict deformation quantization of $S^2$ \cite{BMS94,Lan98,Pe72} according to Definition \ref{def:deformationq}.
%

Indicating the algebra of bounded operators by $B(\text{Sym}^N(\mathbb{C}^2))$, it is known \cite[Theorem 8.1]{Lan17} that
 \begin{align}
 A_0'&=C(S^2); \label{B02}\\
 A_{1/N}'&=M_{N+1}(\mathbb{C})\cong B(\text{Sym}^N(\mathbb{C}^2))\label{BN2},
   \end{align}
are the fibers of a continuous bundle of $C^*$-algebras over the same base space  $I$ as in  \eqref{defI} whose continuous cross-sections are given by all sequences $(a_{1/N})_{N\in\mathbb{N}}\in\Pi_{n\in\mathbb{N}}A_{1/N}'$ for which $a_0\in C(S^2)$ and $a_{1/N}\in M_{N+1}(\mathbb{C})$  and  such that the
sequence  $(a_{1/N})_{N\in\mathbb{N}}$ is asymptotically equivalent to $(Q_{1/N}'(a_0))_{N\in\mathbb{N}}$, in the sense that
\begin{align}
\lim_{N\to\infty}||a_{1/N}-Q_{1/N}'(a_0)||_N=0.\label{equivalencebookklaas}
\end{align}
Here, the symbol $Q_{1/N}'$ denotes the quantization maps
\begin{align}
Q_{1/N}':\tilde{A}_0' \to A_{1/N}',
\end{align}
where $\tilde{A}_0'\subset C^{\infty}(S^2)\subset A'_0$ is the dense Poisson subalgebra made of polynomials in three real variables restricted the sphere $S^2$. The maps $Q_{1/N}'$ are defined by\footnote{Equivalent definitions of these quantization maps are used in literature, see e.g. \cite{Lan17,Pe72}.} the integral computed in weak sense
\begin{align}\label{defquan3}
Q_{1/N}'(p)& :=
 \frac{N+1}{4\pi}\int_{S^2}p(\Omega)|\Omega\rangle\langle\Omega|_Nd\Omega\:,
\end{align}
where $p$ denotes an arbitrary polynomial restricted to  $S^2$, $d\Omega$ indicates the unique $SO(3)$-invariant Haar measure on ${S}^2$ with $\int_{{S}^2} d\Omega = 4\pi$, and $|\Omega\rangle\langle\Omega|_N\in B(\text{Sym}^N(\mathbb{C}^2))$ are so-called $N$ coherent spin states
defined in Appendix \ref{appB}. In particular, if $1$ is the constant function $1(\Omega)=1, \ (\Omega\in S^2)$, and $1_N$ is the identity on $A_{1/N}'=B(\text{Sym}^N(\mathbb{C}^2))$, the provious definition implies 
\begin{align}
Q'_{1/N}(1) = 1_N  \label{defquan4}\:.
\end{align}
Indeed, it can be shown that the quantization maps \eqref{defquan3} - \eqref{defquan4} satisfy the axioms of Definition \ref{def:deformationq}, which implies the existence of a strict deformation quantization of $S^2$.\footnote{We remark that $S^2$ is a special case of a regular integral coadjoint orbit in the dual of the Lie algebra associated to $SU(2)$, which can be identified with $\mathbb{R}^3$. In fact, this theory can be generalized to arbitrary compact connected Lie groups \cite{Lan98}.}  These quantization maps,  constructed from a family coherent states (as opposed to the maps \eqref{deformationqunatizaion} - \eqref{deformationqunatizaion2} which are defined in a complete different way), also define a so-called {\bf Berezin quantization} \cite{Lan98} for which \eqref{Berezinprop} typically holds as well as positivity, in that $Q_{1/N}(f)\geq 0$ if $f\geq 0$ almost everywhere on $S^2$. 

The main result of this work is an asymptotic  relation connecting the bulk and the boundary quantization maps:
$$\left|\left|Q_{1/N}(p)|_{Sym^{(N)}( (\bC^2)^{N\otimes})}  - Q'_{1/N}(p|_{S^2})\right|\right|_N \to 0 \quad \mbox{for $N\to +\infty$}
\:, \quad p \in \tilde{A}_0$$
established in Theorem \ref{MAIN}. We stress that  the validity of the {\em  Dirac-Groenewold-Rieffel condition} (\ref{Diracgroenewold}) for both maps is possible just thank to (\ref{restPoi}).
\newline
\newline
The plan of this paper is as follows. In section 2 we state and prove our main theorem (Theorem \ref{MAIN}) establishing a connection between the strict deformation quantization of $X_2$ and the one of $S^2$ defined above. We show that the quantization maps $Q_{1/N}$ defined by \eqref{deformationqunatizaion} - \eqref{deformationqunatizaion2} whose images are restricted to $\text{Sym}^N(\mathbb{C}^2)$ satisfy the identity above with respect to the other quantization map $Q'_{1/N}$. In section 3 we apply our theorem to the Curie-Weiss model which links the corresponding quantum Hamiltonian to its classical counter part on the sphere. In the appendix we provide a comprehensive overview of useful definitions.

\section{Interplay  of bulk quantization map $Q_{1/N}$ and  boundary quantization map $Q'_{1/N}$}
In order to arrive at the main thereom of this paper we first introduce some vector spaces. We let  $P_N$ to be  the complex vector space of polynomials in the  variables $x,y,z \in \bR^3$  of degree $\leq N$ where  $N\geq 1$, and let  $P_N(S^2)$ be the vector space made of the restrictions to $S^2$ of those polynomials.

\subsection{Preparatory results on $Q'_{1/N}$ and harmonic polynomials}
Definition (\ref{defquan3}) can actually be stated replacing the polynomial $p$ by a generic  $f\in C(S^2)$, though its meaning as a quantization map is valid for  the domain of the  polynomials restricted to  $S^2$ as indicated in (\ref{defquan3}).
The  map associating  $f \in C(S^2)$  with 
\begin{align}
&Q'_{1/N}(f) : \text{Sym}^N(\mathbb{C}^2) \to \text{Sym}^N(\mathbb{C}^2);\\
&Q'_{1/N}(f) := \frac{N+1}{4\pi}\int_{S^2} f(\Omega) |\Omega\rangle \langle \Omega|_N d\Omega,\label{one}
\end{align}
is well-defined and  {\em it is surjective on $B(\text{Sym}^N(\mathbb{C}^2))$} since,
 for every  $A : \text{Sym}^N(\mathbb{C}^2)\to \text{Sym}^N(\mathbb{C}^2)$, there exists a function
 $p \in P_N(S^2)$  such that $A=Q'_{1/N}(p)$.  Indeed, that the function
\beq p(\Omega) := tr(A \Delta_N^{(1)}(\Omega))\:,\label{defp}\eeq
where  $\Omega \in S^2$ and $\Omega\mapsto\Delta_N^{(1)}\in\text{Sym}^N(\mathbb{C}^2)$ is defined by Definition (2.6) in \cite{Germanpaper}, defines a polynomial on the sphere, i.e. 
\beq tr(A \Delta_N^{(1)}) \in P_N(S^2).\label{inPN}\eeq
In particular,  we realize that the {\em linear} map (\ref{one}) {\em cannot be injective on the domain  $C(S^2)$} since this space is  infinite dimensional whereas the co-domain is finite dimensional.
Nevertheless, if restricting the domain to  $P_N(S^2)$, the said map turns out to be bijective.
\begin{proposition}\label{prop1}
 The map 
\beq P_N(S^2) \ni p \mapsto Q'_{1/N}(p) := \frac{N+1}{4\pi}\int_{S^2} p(\Omega) |\Omega\rangle \langle \Omega|_N d\Omega \in B(\text{Sym}^N(\mathbb{C}^2))\label{one1}\eeq
is a bijection for $N>1$.
\end{proposition}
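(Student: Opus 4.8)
The plan is to deduce bijectivity from the surjectivity already obtained in the discussion preceding the proposition together with an elementary dimension count, so that the whole matter reduces to checking $\dim_{\mathbb C} P_N(S^2) = \dim_{\mathbb C} B(\text{Sym}^N(\mathbb{C}^2))$.

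First I would record that surjectivity is essentially in hand: given any $A \in B(\text{Sym}^N(\mathbb{C}^2))$, the function $p(\Omega) := \mathrm{tr}(A\,\Delta_N^{(1)}(\Omega))$ lies in $P_N(S^2)$ by \eqref{inPN}, and the defining properties of the spin Stratonovich--Weyl kernel $\Delta_N^{(1)}$ against the coherent-state projections $|\Omega\rangle\langle\Omega|_N$ yield $Q'_{1/N}(p) = A$; hence the map in \eqref{one1} is onto. Both $P_N(S^2)$ and $B(\text{Sym}^N(\mathbb{C}^2))$ are finite dimensional, so once we know their dimensions agree, rank--nullity forces the surjective linear map $Q'_{1/N}$ to be injective as well, hence a bijection.

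Next I would compute the two dimensions. On the quantum side $\dim_{\mathbb C}\text{Sym}^N(\mathbb{C}^2) = N+1$, so $\dim_{\mathbb C} B(\text{Sym}^N(\mathbb{C}^2)) = (N+1)^2$; equivalently, as an $SO(3)$-module $B(\text{Sym}^N(\mathbb{C}^2))$ decomposes by Clebsch--Gordan into irreducibles of every degree $0,1,\dots,N$ (this is also why the symbols $\mathrm{tr}(A\,\Delta_N^{(1)}(\cdot))$ are polynomials of degree $\le N$ on $S^2$). On the classical side I would use the spherical-harmonic decomposition of $P_N(S^2)$: every homogeneous polynomial of degree $m$ in $x,y,z$ equals $h_m + r^2 h_{m-2} + r^4 h_{m-4} + \cdots$ with each $h_j$ harmonic of degree $j$, and since $r^2 = 1$ on $S^2$ this gives $P_N(S^2) = \bigoplus_{\ell=0}^N \mathcal H_\ell|_{S^2}$, the sum being direct because spherical harmonics of distinct degrees are $L^2(S^2)$-orthogonal. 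With $\dim_{\mathbb C}\mathcal H_\ell|_{S^2} = 2\ell+1$ this gives $\dim_{\mathbb C} P_N(S^2) = \sum_{\ell=0}^N(2\ell+1) = (N+1)^2$, matching the quantum side, and the proposition follows.

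The only genuinely non-formal input is the pair of facts borrowed from the preamble and from \cite{Germanpaper}: that $\Omega \mapsto \mathrm{tr}(A\,\Delta_N^{(1)}(\Omega))$ has spherical-harmonic expansion terminating at degree $N$ (so it lies in $P_N(S^2)$) and that $Q'_{1/N}$ sends this symbol back to $A$. These are standard properties of the spin Wigner/Stratonovich--Weyl quantizer on $\text{Sym}^N(\mathbb{C}^2)$ — both $\{|\Omega\rangle\langle\Omega|_N\}$ and $\{\Delta_N^{(1)}(\Omega)\}$ linearly span $B(\text{Sym}^N(\mathbb{C}^2))$ — and I would simply invoke Definition (2.6) and the accompanying identities of \cite{Germanpaper}, noting that it is precisely the construction of $\Delta_N^{(1)}$ there that accounts for the hypothesis $N>1$ in the statement. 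Everything else is the dimension count above, so I do not expect any real obstacle beyond citing those properties correctly.
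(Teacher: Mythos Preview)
Your proposal is correct and follows essentially the same route as the paper: surjectivity via the symbol $p(\Omega)=\mathrm{tr}(A\,\Delta_N^{(1)}(\Omega))$ together with the equality of dimensions $\dim P_N(S^2)=\dim B(\text{Sym}^N(\mathbb{C}^2))=(N+1)^2$, whence injectivity by elementary linear algebra. The only difference is cosmetic: the paper simply cites \cite{Bookpol} for $\dim P_N(S^2)=(N+1)^2$, whereas you supply the spherical-harmonic decomposition argument explicitly.
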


\begin{proof}
The said map is obviously surjective, as already observed, because, by defining  $p(\Omega) := tr(A \Delta_N^{(1)}(\Omega))$ for 
$A\in B(Sym^{(N)}( \bC^2) )$, we have  $A = Q'_{1/N}(p)$. Let us prove injectivity. It is well known \cite{Bookpol} that $\dim(P_N(S^2))= (N+1)^2$ if $N>1$. On the other hand, $\dim(B(Sym^{(N)}( \bC^2) ))= (N+1)^2$ as one immediately proves. As
$\dim(B(Sym^{(N)}( \bC^2) ))= \dim(P_N(S^2)) < +\infty$, surjectivity implies injectivity from elementary results of linear algebra.
\end{proof}
\noindent
%

\noindent 
Going back to Weyl, let us recall a few results on the theory of $SO(3)$ representations of polynomials restricted to the unit sphere.
 The group $SO(3)$ admits a natural representation on $P_N(S^2)$ given by
\beq
SO(3) \ni \bR \mapsto \rho_R\:,  \quad  (\rho_R p)(\Omega) := p(R^{-1}\Omega)\quad\forall p\in P_N(S^2)\:, \forall \Omega \in S^2. \label{rho}
\eeq 
In turn, the space $P_N(S^2)$ admits a direct decomposition into {\em  invariant and irreducible} subspaces under the action of $\rho$, viz.
$$P_N(S^2)= \bigoplus_{j=0,1, \ldots, N} P_N^{(j)}(S^2).$$
Each subspace $P_N^{(j)}(S^2)$ consists of the restrictions to $S^2$  of the   homegeneous polynomials  of order $j$ that  are also harmonic functions. $P_N^{(j)}(S^2)$ has dimension $2j+1$.  

\begin{example}
{\em  If $N=2$
$$P_2(S^2)=P_2^{(0)}(S^2) \oplus P_2^{(1)}(S^2) \oplus P_2^{(2)}(S^2)\:.$$
In the right-hand side,  the first subaspace is the span 
of the restriction to $S^2$ of the constant polynomial  $p(x,y,z):=1$, the second one is the span 
of the restrictions of  the three polynomials $p_j(x,y,z):= x_j$, $j=1,2,3$ where $x_1=x,x_2=y,x_3=z$,  and the  third one is the  the span 
of the restrictions to $S^2$ of five elements suitably chosen\footnote{The restrictions to $S^2$ of these six polynomials and the one of  the above $p$ form a linearly dependent set.}  of the six polynomials $p_{ij}(x,y,z) := x_ix_j - \frac{1}{3}\delta_{ij}(x^2+y^2+z^2)$ for
$i,j \in \{1,2,3\}$.}
\end{example}

If  $\rho_R^{(j)}$ is the restriction of $\rho_R$ to $P_N^{(j)}(S^2)$ and $\{p^{(j)}_{m}\}_{m=-j, -j+1, \ldots, j-1, j}$ is a basis of 
$P_N^{(j)}(S^2)$, we find
\beq \rho^{(j)}_R p^{(j)}_{m} = \sum_{m' =-j}^jD^{(j)}_{mm'}(R^{-1}) p^{(j)}_{m'}\label{rho2} \:.\eeq
Each class of matrices $\{D^{(j)}(R)\}_{R\in SO(3)}$ defines an irreducible representation of $SO(3)$ in $\bC^{2j+1}$.  These representations  are completely fixed by their dimension 
i.e., by $j$, up to equivalence given by similarity transformations, and different $j$ correspond to similarity inequivalent representations. 
Every irreducible representation of $SO(3)$ is unitarily equivalent to one of the $D^{(j)}$.

\subsection{The main theorem}
Before arriving at the main theorem of this paper, we recall that by construction the space $\tilde{A}_0$ is the complex vector space of polynomials in three variables on the closed unit ball $B^3$ which in particular contains all polynomials of $P_M \ (M\in\mathbb{N})$ restricted to $B^3$  \cite{LMV}. In the proof of the theorem we occasionally use the space $\tilde{A}_0$ as well as $P_N$, where the former is the domain of the quantization maps $Q_{1/N}$, whereas the latter is used to underline the degree of the polynomial in question.
\begin{theorem}\label{MAIN}
If $p\in \tilde{A}_0$, then
$$\left|\left|Q_{1/N}(p)|_{Sym^{(N)}( (\bC^2)^{N\otimes})}  - Q'_{1/N}(p|_{S^2})\right|\right|_N \to 0 \quad \mbox{for $N\to +\infty$}
\:,$$
the (operator) norm being the one on B($\text{Sym}^N(\mathbb{C}^2))$,
\end{theorem}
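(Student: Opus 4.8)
\emph{Proof strategy.} The plan is to reduce everything to the three first‑degree generators $x_1,x_2,x_3$ by exploiting that \emph{both} families $Q_{1/N}$ and $Q'_{1/N}$ are strict deformation quantizations in the sense of Definition \ref{def:deformationq}, and in particular that each of them is asymptotically multiplicative (one of the two continuity properties that hold automatically after Definition \ref{def:deformationq}). First, since $p\in\tilde A_0$ is a finite linear combination of monomials $p_L=x_{j_1}\cdots x_{j_L}$ and both quantization maps are linear, by the triangle inequality it suffices to treat $p=p_L$ for a single (fixed) degree $L$; we may also assume $N\ge L$, as only the tail of the sequence matters.

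\textbf{Factorising the two sides.} The operators $Q_{1/N}(f)$ of \eqref{deformationqunatizaion}--\eqref{deformationqunatizaion2} are invariant under permutations of the $N$ tensor legs, hence they preserve the symmetric subspace $\text{Sym}^N(\mathbb{C}^2)$, and on operators with this property the restriction map $A\mapsto A|_{\text{Sym}^N(\mathbb{C}^2)}$ is multiplicative and norm non‑increasing. Combining this with $\|Q_{1/N}(fg)-Q_{1/N}(f)Q_{1/N}(g)\|\to0$ (applied to $f=x_{j_1}$, $g=x_{j_2}\cdots x_{j_L}$ and then iterated $L-1$ times, $L$ being fixed) I would obtain
\[
Q_{1/N}(p_L)\big|_{\text{Sym}^N(\mathbb{C}^2)}=\prod_{i=1}^{L}\Big(Q_{1/N}(x_{j_i})\big|_{\text{Sym}^N(\mathbb{C}^2)}\Big)+o(1)
\]
in $\|\cdot\|_N$, and, treating $p_L|_{S^2}=(x_{j_1}|_{S^2})\cdots(x_{j_L}|_{S^2})$ as a pointwise product of elements of $\tilde A_0'$, likewise
\[
Q'_{1/N}(p_L|_{S^2})=\prod_{i=1}^{L}Q'_{1/N}(x_{j_i}|_{S^2})+o(1).
\]
The $o(1)$'s are legitimate because every factor is uniformly norm‑bounded: $\|Q_{1/N}(x_a)\|\to\|x_a\|_\infty$ and $\|Q'_{1/N}(x_a|_{S^2})\|\to\|x_a|_{S^2}\|_\infty$. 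A telescoping estimate $\big\|\prod_iA_i-\prod_iB_i\big\|\le\sum_i\big(\prod_{k<i}\|A_k\|\big)\|A_i-B_i\|\big(\prod_{k>i}\|B_k\|\big)$ with $A_i=Q_{1/N}(x_{j_i})|_{\text{Sym}^N(\mathbb{C}^2)}$, $B_i=Q'_{1/N}(x_{j_i}|_{S^2})$ then reduces the whole theorem to the first‑degree case $\big\|Q_{1/N}(x_a)|_{\text{Sym}^N(\mathbb{C}^2)}-Q'_{1/N}(x_a|_{S^2})\big\|_N\to0$.

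\textbf{The first‑degree case.} Here $\text{Sym}^N(\mathbb{C}^2)$ carries the irreducible spin‑$N/2$ representation of $SU(2)$; let $J^{(N)}_1,J^{(N)}_2,J^{(N)}_3$ be the angular‑momentum operators, so $\|J^{(N)}_a\|_N=N/2$. On one side, $Q_{1/N}(x_a)=S_{1,N}(b_a)=\tfrac1N\sum_{i=1}^N b_a^{(i)}$, and since $b_a$ is (up to the fixed normalisation of the basis $ib_1,ib_2,ib_3$ of $\mathfrak{su}(2)$) the Pauli matrix $\sigma_a$, restriction to the symmetric subspace gives $Q_{1/N}(x_a)|_{\text{Sym}^N(\mathbb{C}^2)}=\lambda_N J^{(N)}_a$ with $\lambda_N=2/N$ in the conventions of \cite{LMV}. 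On the other side, by the manifest $SO(3)$‑equivariance of \eqref{defquan3} and Schur's lemma $Q'_{1/N}(x_a|_{S^2})$ is a scalar multiple $\lambda'_N J^{(N)}_a$ of the \emph{same} operator; evaluating \eqref{defquan3} on a spin‑coherent state aligned with the $a$‑axis reduces the constant to a one‑dimensional Beta‑type integral, yielding $\lambda'_N=2/(N+2)$. Hence the first‑degree difference equals $|\lambda_N-\lambda'_N|\cdot\|J^{(N)}_a\|_N=\big|\tfrac2N-\tfrac2{N+2}\big|\cdot\tfrac N2=\tfrac2{N+2}\to0$, which completes the reduction chain and proves the theorem.

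\textbf{Main obstacle.} The delicate point is precisely the first‑degree case: one must check that the two a priori unrelated recipes send $x_a$ to multiples of the \emph{same} collective operator $J^{(N)}_a$ — automatic once the same identification $\mathbb{R}^3\cong\mathfrak{su}(2)^*$ is used throughout — and, crucially, that the scalars $\lambda_N,\lambda'_N$ share the \emph{same} leading asymptotics $\sim 2/N$; otherwise the difference, of size $(\lambda_N-\lambda'_N)\cdot N/2$, would fail to vanish. Equality of the leading constants is consistent with (indeed forced by) the norm axiom $\|Q_{1/N}(x_a)\|,\|Q'_{1/N}(x_a|_{S^2})\|\to\|x_a\|_\infty$, together with the fact that the linear function $x_a$ attains its extrema on the boundary sphere so that its sup‑norm over $B^3$ equals its sup‑norm over $S^2$. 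An alternative to this step, closer to the harmonic‑analysis machinery set up above, would be to use the bijectivity in Proposition \ref{prop1} to write $Q_{1/N}(p_L)|_{\text{Sym}^N(\mathbb{C}^2)}=Q'_{1/N}(r_N)$ with $r_N:=\mathrm{tr}\big(Q_{1/N}(p_L)|_{\text{Sym}^N(\mathbb{C}^2)}\,\Delta_N^{(1)}(\cdot)\big)\in P_N(S^2)$ as in \eqref{defp}, and then to prove $\|r_N-p_L|_{S^2}\|_\infty\to0$ and invoke $\|Q'_{1/N}(g)\|_N\le\|g\|_\infty$; this route, however, needs the same first‑degree computation at its core.
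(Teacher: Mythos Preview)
Your argument is correct and takes a genuinely different route from the paper's proof. The paper proceeds by harmonic analysis: it decomposes $P_M(S^2)=\bigoplus_{j\le M}P_M^{(j)}(S^2)$ into $SO(3)$-irreducibles, uses the bijectivity of $Q'_{1/N}$ on $P_N(S^2)$ (Proposition \ref{prop1}) together with $SU(2)$-equivariance and Schur's lemma to show that $Q_{1/N}(p^{(j)}_m)|_{\mathrm{Sym}^N(\mathbb{C}^2)}=C^{(j)}_N\,Q'_{1/N}(p^{(j)}_m|_{S^2})$ for a scalar depending only on $j$, proves $C^{(j)}_N\to 1$ via coherent-state expectation values, and finally handles the kernel of the restriction $p\mapsto p|_{S^2}$ (polynomials of the form $q\cdot(x^2+y^2+z^2-1)$) by a separate combinatorial calculation (Proposition \ref{Chris}).

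Your approach sidesteps the harmonic decomposition and the kernel analysis entirely: you use only the abstract asymptotic multiplicativity of both quantization maps (a consequence of the continuous-bundle axioms recalled after Definition \ref{def:deformationq}) to reduce everything, via a telescoping product, to the degree-one case, which you settle by an explicit calculation ($\lambda_N=2/N$ versus $\lambda'_N=2/(N+2)$). This is more elementary and arguably more robust, since it uses nothing beyond the defining properties of a strict deformation quantization plus the single identification of $Q_{1/N}(x_a)|_{\mathrm{Sym}^N}$ and $Q'_{1/N}(x_a|_{S^2})$ with scalar multiples of $J^{(N)}_a$. The paper's route, on the other hand, yields finer structural information along the way: for each irreducible type $j$ one gets an exact proportionality $Q_{1/N}(p^{(j)}_m)|_{\mathrm{Sym}^N}=C^{(j)}_N\,Q'_{1/N}(p^{(j)}_m|_{S^2})$ with a universal constant $C^{(j)}_N\to 1$, which is not visible in your factorisation argument.
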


\begin{remark}
{\em We stress that the result does not automatically imply  that the cross-sections \eqref{deformationqunatizaion} - \eqref{deformationqunatizaion2} whose images are resticted to $\text{Sym}^N(\mathbb{C}^2)$ are also continous cross-sections of the fibers defined in \eqref{B02} - \eqref{BN2}, since $f\in A_0=C(B^3)$ does {\em not} imply that $f\in A_0'=C(S^2)$.}
\end{remark}

\begin{proof} We start the proof by  discussing the interplay between the action of $SO(3)$ and the quantization maps $Q_{1/N}$, defined in \eqref{deformationqunatizaion}. We first focus on a homogeneous polynomial of order $M<N$.\footnote{As we are dealing with a limit in $N$, we can safely take $N$ such that $M<N$.}
If   $k_1,\ldots, k_M$ are taken in  $\{1,2,3\}$ and
\beq p_{k_1\cdots k_M}(x,y,z) :=x_{k_1} \cdots x_{k_M}\:,
\label{p}\eeq
the representation (\ref{rho}) implies that  
\beq \left(\rho_R p_{k_1\cdots k_M}\right)(x,y,z)   = {(R_U^{-1})_{k_1}}^{j_1}  \cdots {(R_U^{-1})_{k_M}}^{j_M} p_{k_1\cdots k_M}(x,y,z)\label{rest1}\:.\eeq
We stress that when restricting to $S^2$, every $p^{(j)}_m$ is a linear combination of  the restricitions of the polynomials $p_{k_1\cdots k_M}$ so that, by  extending (\ref{rest1}) by linearity and working on $p^{(j)}_m$, (\ref{rest1}) must coincide with (\ref{rho2})
$$
\left(\rho_R^{(j)} p^{(j)}_m\right)(x,y,z) = \sum_{m'=-j}^j D^{(j)}_{mm'}(R^{-1}) p^{(j)}_{m'}(x,y,z) \:, \quad (x,y,z) \in S^2\: \label{identityonsphere}
$$
Since both sides are restrictions of homegeneous polynomials of the same degree $j$, this identity is valid also removing the constraint
$(x,y,z) \in S^2$:
\beq 
\left(\rho_R^{(j)} p^{(j)}_m\right)(x,y,z) = \sum_{m'=-j}^j D^{(j)}_{mm'}(R^{-1}) p^{(j)}_{m'}(x,y,z) \:, \quad (x,y,z) \in \bR^3\:.\label{aggno}
\eeq
where now the $p^{(j)}_{m}$ are homegeneous polynomials in $P_M$ whose restrictions are the basis elements of $P^{(j)}_M(S^2)$ with the same name. We underline that for our quantization maps $Q_{1/N}$ we  need $p^{(j)}_{m}$  to be a polynomial in $\tilde{A}_0$, rather than in $P_M$.  However, since $\tilde{A}_0$ contains all polynomials of $P_M$ restricted to $B^3$ which has non-empty interior, polynomials of $P_M$ are in one-to-one correspondence with those of  $\tilde{A}_0$. Therefore, in view of \eqref{aggno} the same statement holds when we replace $(x,y,z)\in\mathbb{R}^3$ by $(x,y,z)\in B^3$.
Now, by definition of the quantization maps  $Q_{1/N}$ we know that
$$Q_{1/N}(p_{k_1\cdots k_M}) = S_{N,M}\left(\sigma_{k_1}  \otimes \cdots \otimes\sigma_{k_M}  \otimes\underbrace{I \otimes \cdots \otimes I}_{N-M\: times}\right)\:.$$ 
Let us indicate by $R_U\in SO(3)$ the image of  $U\in SU(2)$ through the universal covering homomorphism $\Pi: SU(2) \to SO(3)$.
This covering homomorphism as is well known satisfies (using the summation convention on repeated indices)
\beq U\sigma_j U^* = {(R^{-1}_U)_j}^k \sigma_k.\label{rot} \eeq
Remembering that  $\text{Sym}^N(\mathbb{C}^2)$ is invariant under the tensor representation
$\underbrace{U\otimes \cdots \otimes U}_{N\: times}$, we have
$$\underbrace{U\otimes \cdots \otimes U}_{N\: times}|_{\text{Sym}^N(\mathbb{C}^2)} \:Q_{1/N}(p_{k_1\cdots k_M})|_{\text{Sym}^N(\mathbb{C}^2)}\: \underbrace{U^*\otimes \cdots \otimes U^*}_{N\: times}|_{\text{Sym}^N(\mathbb{C}^2)}$$
$$= \left(\underbrace{U\otimes \cdots \otimes U}_{N\: times} Q_{1/N}(p_{k_1\cdots k_M})\underbrace{U^*\otimes \cdots \otimes U^*}_{N\: times}\right)|_{\text{Sym}^N(\mathbb{C}^2)} $$
$$= {(R^{-1}_U)_{k_1}}^{j_1}  \cdots {(R^{-1}_U)_{k_M}}^{j_M}   Q_{1/N}(p_{j_1\ldots j_M})|_{\text{Sym}^N(\mathbb{C}^2)} \:.$$
Let us consider linear combinations $p^{(j)}_m$ of polynomials $p_{k_1\cdots k_M}$ whose restriction to $S^2$ define the basis element, indicated with the same symbol, 
 $p^{(j)}_m \in P_M^{(j)}(S^2)$. Since  the map $Q_{1/N}$ is linear, 
from (\ref{aggno}) we have
$$\left(\underbrace{U\otimes \cdots \otimes U}_{N\: times} Q_{1/N}(p^{(j)}_m)\underbrace{U^*\otimes \cdots \otimes U^*}_{N\: times}\right)|_{\text{Sym}^N(\mathbb{C}^2)}$$ \beq = \sum_{m'} D^{(j)}_{mm'}(R^{-1}) Q_{1/N}(p^{(j)}_{m'})|_{\text{Sym}^N(\mathbb{C}^2)} \label{fund}\:.\eeq
Let us now pass to the other quantization map $Q'_{1/N}$ observing that \eqref{fund} and Proposition \ref{prop1} entail
\beq Q_{1/N}(p^{(j)}_m)|_{\text{Sym}^N(\mathbb{C}^2)} =Q'_{1/N}(q^{(j)}_m)= \frac{N+1}{4\pi} \int_{S^2} q^{(j)}_m(\Omega) |\Omega \rangle \langle\Omega|_N d\Omega\label{two}\eeq
for some  $q^{(j)}_m \in P_N(S^2)$ (where $N>M$ in general)  is the unknown restriction to  $S^2$ of a polynomial in $P_N$.
Exploiting (\ref{fund}) and linearity we find 
$$\underbrace{U\otimes \cdots \otimes U}_{N\: times}|_{\text{Sym}^N(\mathbb{C}^2)} \:Q_{1/N}(p_m^{(j)})|_{\text{Sym}^N(\mathbb{C}^2)}\: \underbrace{U^*\otimes \cdots \otimes U^*}_{N\: times}|_{\text{Sym}^N(\mathbb{C}^2)}$$
\beq =  \frac{N+1}{4\pi} \int_{S^2} \sum_{m'} D^{(j)}_{m'}(R^{-1})
 q^{(j)}_{m'}(\Omega) |\Omega\rangle \langle \Omega|_N d\Omega\label{four}\:.\eeq
Again, from  (\ref{one}) we have the general relation
$$VA^{(N)}_f V^*= \frac{N+1}{4\pi}\int_{S^2} f(\Omega) V|\Omega\rangle \langle \Omega|_NV^*d\Omega\:.$$
Specializing to $V= \underbrace{U\otimes \cdots \otimes U}_{N\: times}|_{\text{Sym}^N(\mathbb{C}^2)} $ we obtain (see Lemma \ref{lemma} below)
\beq V|\Omega \rangle = e^{i\alpha_{U,\Omega}} |R_U\Omega\rangle\label{provare}\eeq
where the phase is irrelevant as it disappears in view of later computations,
hence
$$V A^{(N)}_f V^*= \frac{N+1}{4\pi}\int_{S^2} f(\Omega) |R_U\Omega\rangle \langle R_U\Omega|_Nd\Omega$$
$$=  \frac{N+1}{4\pi}\int_{S^2} f(R^{-1}_U\Omega) |R_UR^{-1}_U\Omega\rangle \langle R_UR_U^{-1}\Omega|_NdR_U^{-1}\Omega,$$
namely
\beq \underbrace{U\otimes \cdots \otimes U}_{N\: times}|_{\text{Sym}^N(\mathbb{C}^2)}  \: A \underbrace{U^*\otimes \cdots \otimes U^*}_{N\: times}|_{\text{Sym}^N(\mathbb{C}^2)} = \frac{N+1}{4\pi}\int_{S^2} f(R^{-1}_U\Omega) |\Omega\rangle \langle \Omega|_Nd\Omega\label{three}\eeq
where we took advantage of  $d\Omega = dR^{-1}\Omega$ if  $R\in SO(3)$.
To conclude, if $A = Q_{1/N}(p^{(j)}_m)$, identity    (\ref{four}) yields
$$\int_{S^2} \sum_{m'}D^{(j)}_{mm'} (R^{-1}) q^{(j)}_{m'}(\Omega)|\Omega\rangle \langle \Omega|_N d\Omega=\int_{S^2} q^{(j)}_m (R^{-1}_U\Omega) |\Omega\rangle \langle \Omega|_Nd\Omega\:.$$
Since the map  (\ref{one}) is bijective on $P_N(S^2)$ it must be
\beq 
q^{(j)}_m (R^{-1}\Omega)=\sum_{m'}D^{(j)}_{mm'} (R^{-1}) q^{(j)}_{m'}(\Omega)\:, \quad \forall \Omega \in S^2 \:,\forall R \in SO(3)\label{pol1}\eeq
Linearity and bijectivity of the map (\ref{one1}) also implies that, varying $m=-j,-j+1,\ldots, j-1,j$ the functions 
$q^{(j)}_m$ form a basis of a $2j+1$ dimensional subspace of $P_N(S^2)$.
We can expand  each of these functions over the basis of functions $p^{(j)}_m$ of $P_N(S^2)$:
\beq 
q^{(j')}_{m'} = \sum_{j=0}^N \sum_{m=-j}^j C^{(j',j)}_{m'm} p^{(j)}_m\:,\label{decqp}
\eeq
where both sides are now and henceforth evaluated on $S^2$.
Here (\ref{rho2}) and (\ref{pol1}) together imply 
$$
  \sum_{j,m, \kappa} D^{(j)}_{m'\kappa}(R) C^{(j',j)}_{\kappa m} p^{(j)}_m = \sum_{j,m,\ell} C^{(j',j)}_{m'm}
 D^{(j)}_{m \ell}(R)p^{(j)}_\ell \:, 
$$
that is
$$
  \sum_{j,\ell , \kappa} D^{(j)}_{m'\kappa}(R) C^{(j',j)}_{\kappa \ell} p^{(j)}_\ell  = \sum_{j,m,\ell} C^{(j',j)}_{m'm} D^{(j)}_{m \ell }(R)p^{(j)}_\ell\:.
$$
Since the set of the (restrictions of to the sphere of the) $p^{(j)}$ is a basis,

$$
  \sum_{m} D^{(j)}_{m'm}(R) C^{(j',j)}_{m \ell}   = \sum_{m} C^{(j',j)}_{m'm} D^{(j)}_{m \ell }(R)\:.
$$
Since the representation $D^{(j)}$ is irreducible, Schur's lemma implies that  there are complex numbers $C^{(j',j)} $ such that 
$$ C^{(j',j)}_{m \ell}  =C^{(j',j)} \delta _{m \ell}\:.$$
In summary, (\ref{decqp2}) reduces to
\beq 
q^{(j')}_{m} = \sum_{j=0}^M C^{(j',j)} p^{(j)}_m|_{S^2}\:.\label{decqp2}
\eeq
However, since the elements in the left-hand side are $2j'+1$ whereas, for every $j$ in the right-hand side we have 
$2j+1$ elements and the spaces of these representations transform separately, the only possibility is that $ C^{(j',j)}= 0$ if $j\neq j'$. In other words,
\beq 
q^{(j, N)}_{m} =  C^{(j)}_N p^{(j)}_m|_{S^2} \quad \mbox{for every given $j=0,1,2,\ldots,M$} \:.\label{decqp3}
\eeq
where,
\begin{itemize}
\item [(i)] we have terminated $j$ to $M<N$ because the initial polynomial $p^{(j)}_m$ has been chosen in $P_M^{(j)}(S^2)$;

\item [(ii)]   we have restored the presence of $N$, since $C^{(j)}_N$  may depend on $N$.
\end{itemize}
Let us examine what happens to $ C^{(j)}_N $ at large $N$. First observe that (\ref{decqp3}) immediately implies 
$$Q_{1/N}(p^{(j)}_m)|_{\text{Sym}^N(\mathbb{C}^2)}  =  C^{(j)}_N  \frac{N+1}{4\pi} \int_{S^2}p^{(j)}_{m}(\Omega) |\Omega\rangle \langle\Omega|_N d\Omega.$$
Taking the expectation value $\langle \Omega'| \cdot |\Omega'\rangle$, we find
\beq p^{(j)}_m(\Omega') =  C^{(j)}_N \int_{S^2}p^{(j)}_m(\Omega) \frac{N+1}{4\pi} |\langle \Omega'|\Omega\rangle_N|^2d\Omega\:.\label{pp}\eeq
In Lemma \ref{Valter} below we prove that $\lim_{N\to +\infty}C^{(j)}_N$ exists and is finite.
Hence,  
$$ p^{(j)}_m(\Omega') =  \left(\lim_{N\to +\infty}C^{(j)}_N \right)   p^{(j)}_m(\Omega'),$$
where we exploited  Proposition 4.2 of \cite{LMV}, so that
$$ \lim_{N\to +\infty}C^{(j)}_N=1\:.$$
This reasonig implies the claim for the considered special polynomials since,  for $N\to +\infty$,
$$\left|\left|Q_{1/N}(p^{(j)}_m)|_{\text{Sym}^N(\mathbb{C}^2)}  -   \frac{N+1}{4\pi} \int_{S^2}p^{(j)}_m|_{S^2}(\Omega) |\Omega\rangle \langle\Omega|_N d\Omega\right|\right|_N$$
\beq = |C_N-1| \left|\left|\frac{N+1}{4\pi} \int_{S^2}p^{(j)}_m|_{S^2}(\Omega) |\Omega\rangle \langle\Omega|_N d\Omega\right|\right|_N \leq |C_N-1| ||p^{(j)}_m|_{S^2}||_\infty\to 0 \label{fine}\eeq
The found result  immediately extends to every polynomial of given degree $M$ which can be written as a linear combination of the  $p^{(j)}_m$ viewed as polynomials. 
To pass to a generic polynomal in $\tilde{A}_0$ (say of degree $M$) we observe that, as a consequence of known results  \cite{Bookpol}, the map
$$\tilde{A}_0 \ni p \mpasto p|_{S^2} \in P_M(S^2)$$
has a kernel made of all possible polynomials of the form $q(x,y,z) (x^2+y^2+z^2-1)$ with $q \in P_{M-2}$.  
Furthermore, Proposition \ref{Chris} below proves that, for every $q \in P_{M-2}$,
\beq ||Q_{1/N}(q(x,y,z) (x^2+y^2+z^2-1))|_{\text{Sym}^N(\mathbb{C}^2)}||_N \to 0 \quad \mbox{as $N\to +\infty$}.\label{final}\eeq
So, if  $p \in\tilde{A}_0$ is a polynomial of degree $M$, then  we can write for a finite number of coefficients $ C^{(j,m)}$ and some polynomial $q\in P_{M-2}$,
\begin{align}
p = \sum_{j,m} C^{(j,m)}p_m^{(j)} + q(x,y,z) (x^2+y^2+z^2-1), \label{isthistrue}
\end{align}
where the $p_m^{(j)}$ and $q$ are here intepreted as elements of $P_M$ and $P_{M-2}$ respectively,  restricted to $B^3$.
Hence,
$$Q_{1/N}(p)|_{\text{Sym}^N(\mathbb{C}^2)} = \sum_{j,m} C^{(j,m)}Q_{1/N}(p_m^{(j)})|_{\text{Sym}^N(\mathbb{C}^2)}$$
$$+ Q_{1/N}(q(x,y,z) (x^2+y^2+z^2-1))|_{\text{Sym}^N(\mathbb{C}^2)}\:.$$
The former  term on the right-hand side tends to $Q'_{1/N}(p|_{S^2})$, the latter vanishes as $N\to +\infty$ proving the thesis.
\end{proof}

\subsection{Subsidiary technical results}
\begin{lemma}\label{Valter}  
$\lim_{N\to +\infty}C^{(j)}_N$ exists and is finite. \end{lemma}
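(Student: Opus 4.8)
The plan is to extract $C^{(j)}_N$ explicitly from equation \eqref{pp} by pairing it against a cleverly chosen test function and then to analyze the resulting ratio of integrals. Recall that \eqref{pp} reads $p^{(j)}_m(\Omega') = C^{(j)}_N \int_{S^2} p^{(j)}_m(\Omega)\,\tfrac{N+1}{4\pi}|\langle\Omega'|\Omega\rangle_N|^2\,d\Omega$, valid for all $\Omega'\in S^2$. First I would fix $m$ and a reference point $\Omega'$ at which $p^{(j)}_m(\Omega')\neq 0$ (such a point exists since $p^{(j)}_m$ is not identically zero on $S^2$), so that $C^{(j)}_N = p^{(j)}_m(\Omega')\big/\big(\tfrac{N+1}{4\pi}\int_{S^2} p^{(j)}_m(\Omega)|\langle\Omega'|\Omega\rangle_N|^2\,d\Omega\big)$. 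This reduces the problem to showing that the denominator converges to a finite nonzero limit as $N\to\infty$.

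The key computational input is the known behaviour of the overlap kernel $|\langle\Omega'|\Omega\rangle_N|^2$ for spin coherent states. Writing $\Omega'\cdot\Omega = \cos\gamma$ for the angle $\gamma$ between the two unit vectors, one has the standard identity $|\langle\Omega'|\Omega\rangle_N|^2 = \left(\tfrac{1+\cos\gamma}{2}\right)^N = \cos^{2N}(\gamma/2)$. Thus $\tfrac{N+1}{4\pi}|\langle\Omega'|\Omega\rangle_N|^2$ is, up to the normalization guaranteeing total mass $1$, an approximate identity concentrating at $\Omega=\Omega'$ as $N\to\infty$. Since $p^{(j)}_m$ is continuous (indeed polynomial) on the compact sphere, the integral $\tfrac{N+1}{4\pi}\int_{S^2} p^{(j)}_m(\Omega)\,\cos^{2N}(\gamma/2)\,d\Omega$ converges to $p^{(j)}_m(\Omega')$ as $N\to\infty$. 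In fact one can make this quantitative: expanding $p^{(j)}_m$ near $\Omega'$ and using that $\int_{S^2}\cos^{2N}(\gamma/2)\,d\Omega = \tfrac{4\pi}{N+1}$ exactly, the leading correction is $O(1/N)$, so the denominator equals $p^{(j)}_m(\Omega') + O(1/N)$. Consequently $C^{(j)}_N = 1 + O(1/N) \to 1$, which in particular is finite. (For the present lemma only finiteness and existence of the limit is asserted; the value $1$ then follows from the argument already given in the main proof via Proposition 4.2 of \cite{LMV}.)

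Alternatively — and this may be cleaner to write — one can avoid picking a special point $\Omega'$ and instead integrate \eqref{pp} against $\overline{p^{(j)}_m(\Omega')}$ over $\Omega'\in S^2$, obtaining $\|p^{(j)}_m\|_{L^2(S^2)}^2 = C^{(j)}_N \cdot \tfrac{N+1}{4\pi}\iint_{S^2\times S^2} \overline{p^{(j)}_m(\Omega')}\,p^{(j)}_m(\Omega)\,|\langle\Omega'|\Omega\rangle_N|^2\,d\Omega\,d\Omega'$. Since $p^{(j)}_m$ is a spherical harmonic of degree $j$, the kernel $\tfrac{N+1}{4\pi}|\langle\Omega'|\Omega\rangle_N|^2$ — being $SO(3)$-invariant and expandable in Legendre polynomials — acts on the degree-$j$ eigenspace as multiplication by a scalar $\lambda^{(j)}_N$, namely the $j$-th coefficient in the expansion $\tfrac{N+1}{4\pi}\cos^{2N}(\gamma/2) = \sum_{\ell\geq 0}\tfrac{2\ell+1}{4\pi}\lambda^{(\ell)}_N P_\ell(\cos\gamma)$ with $\lambda^{(\ell)}_N = 2\pi\int_0^\pi\cos^{2N}(\gamma/2)P_\ell(\cos\gamma)\sin\gamma\,d\gamma\cdot\tfrac{N+1}{4\pi}$. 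Then $C^{(j)}_N = 1/\lambda^{(j)}_N$, and the lemma reduces to showing $\lambda^{(j)}_N$ has a finite nonzero limit; the substitution $u=\cos\gamma$ turns this into a Beta-type integral $\tfrac{N+1}{2}\int_{-1}^1\left(\tfrac{1+u}{2}\right)^N P_j(u)\,du$, which is elementary to evaluate (it equals $\tfrac{N!\,(N+1)!}{(N-j)!\,(N+1+j)!} = \prod_{i=1}^{j}\tfrac{N+1-i}{N+1+i} \to 1$).

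The main obstacle I anticipate is purely bookkeeping: making sure the overlap-kernel identity $|\langle\Omega'|\Omega\rangle_N|^2 = \cos^{2N}(\gamma/2)$ and its normalization are stated with the conventions fixed in Appendix \ref{appB}, and then either (a) controlling the approximate-identity error uniformly, or (b) correctly identifying the Legendre/Beta-integral and its asymptotics. Neither is deep, but the eigenvalue approach (route two/three) has the advantage of giving an exact closed form for $C^{(j)}_N$ rather than an asymptotic estimate, which dovetails with the main proof's need for $\lim_N C^{(j)}_N = 1$ and simultaneously settles both the existence and the finiteness claimed here.
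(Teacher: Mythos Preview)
Your first route is essentially the paper's own proof: pick $\Omega'$ with $p^{(j)}_m(\Omega')\neq 0$ (possible because $p^{(j)}_m$ is a basis element, hence nonzero), write $C^{(j)}_N$ as the ratio $p^{(j)}_m(\Omega')$ over the Berezin integral, and invoke the approximate-identity property of the coherent-state kernel (the paper cites this as Proposition 4.2 of \cite{LMV}, equivalently \eqref{Berezinprop}) to see the denominator converges to the nonzero numerator. The paper's argument is just a terse version of this; your added detail on the kernel $|\langle\Omega'|\Omega\rangle_N|^2=\cos^{2N}(\gamma/2)$ and the $O(1/N)$ correction is correct but not needed for the bare existence/finiteness claim.

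Your alternative route---integrating \eqref{pp} against $\overline{p^{(j)}_m}$ and exploiting $SO(3)$-invariance of the kernel to diagonalize it on spherical harmonics, then evaluating the eigenvalue $\lambda^{(j)}_N=\prod_{i=1}^j\frac{N+1-i}{N+1+i}$ via the Beta integral---is genuinely different from what the paper does. It trades the soft approximate-identity argument for an explicit closed form $C^{(j)}_N=1/\lambda^{(j)}_N$, which has the advantage of giving the exact rate $C^{(j)}_N=1+\frac{j(j+1)}{N}+O(1/N^2)$ and making the limit $1$ immediate without a separate appeal to \cite{LMV}. The paper's approach is shorter and requires no computation; yours is more informative and self-contained.
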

\begin{proof}
Since the  left-hand side of  (\ref{pp}) does not depend on  $N$ and the integral in the right-hand side tends to $p^{(j)}_m(\Omega')$, the only possibility that the limit $\lim_{N\to +\infty}C^{(j)}_N $ prevents from existing (or that makes it infinite)
is $p_m^{(j)}(\Omega') =0$.  This result should be true for all $\Omega'$, since  $\lim_{N\to +\infty}C_N$  is independent of 
 $\Omega'$. However the polynomial  $p_m^{(j)}$ (restricted to $S^2$) is not the zero funtion since it is an element of a basis.
\end{proof}

\begin{lemma}\label{lemma} Eq. (\ref{provare}) is true.
\end{lemma}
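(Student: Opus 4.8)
The plan is to reduce the claim to the single-particle case $N=1$ and then take tensor powers. Recall from Appendix~\ref{appB} that, up to normalization and an overall phase, the coherent spin state is the $N$-fold tensor power $|\Omega\rangle_N = |\Omega\rangle_1^{\otimes N}$ of the single-particle coherent state $|\Omega\rangle_1 \in \mathbb{C}^2$, and that $|\Omega\rangle_1$ is characterized (again up to phase) as the normalized eigenvector of $\sum_{j=1}^3 \Omega_j \sigma_j$ with the extremal eigenvalue $+1$. Since $V = (U\otimes\cdots\otimes U)|_{\text{Sym}^N(\mathbb{C}^2)}$ acts on a product vector by $V(|\psi\rangle^{\otimes N}) = (U|\psi\rangle)^{\otimes N}$, it is enough to prove that $U|\Omega\rangle_1 = e^{i\beta_{U,\Omega}} |R_U\Omega\rangle_1$ for some $\beta_{U,\Omega}\in\mathbb{R}$; then (\ref{provare}) holds with $\alpha_{U,\Omega} = N\beta_{U,\Omega}$, and, as already noted in the proof of Theorem~\ref{MAIN}, the precise value of the phase is irrelevant to what follows.

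For the single-particle statement I would invoke the covering identity (\ref{rot}), $U\sigma_j U^* = {(R_U^{-1})_j}^k \sigma_k$. Contracting with $\Omega_j$ and using orthogonality of $R_U$ (so that ${(R_U^{-1})_j}^k$ transposes to $R_U$) yields $U\big(\sum_j \Omega_j\sigma_j\big)U^* = \sum_k (R_U\Omega)_k \sigma_k$. Hence, if $\big(\sum_j\Omega_j\sigma_j\big)|\Omega\rangle_1 = |\Omega\rangle_1$, then $\big(\sum_k (R_U\Omega)_k\sigma_k\big)\,U|\Omega\rangle_1 = U\big(\sum_j\Omega_j\sigma_j\big)|\Omega\rangle_1 = U|\Omega\rangle_1$, so $U|\Omega\rangle_1$ is a unit $(+1)$-eigenvector of $\sum_k(R_U\Omega)_k\sigma_k$. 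Because a unit Pauli vector has simple eigenvalues $\pm 1$, that eigenspace is one-dimensional, whence $U|\Omega\rangle_1$ and $|R_U\Omega\rangle_1$ differ only by a phase, which is exactly the $N=1$ claim.

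If instead Appendix~\ref{appB} introduces $|\Omega\rangle_N$ through the group action — say $|\Omega\rangle_N = (U_\Omega\otimes\cdots\otimes U_\Omega)|_{\text{Sym}^N(\mathbb{C}^2)}\,|0\rangle_N$ with $U_\Omega\in SU(2)$ any element whose image $R_{U_\Omega}\in SO(3)$ carries the reference direction to $\Omega$ and $|0\rangle_N$ a fixed highest-weight reference vector — the argument is the same in substance: $V|\Omega\rangle_N = (UU_\Omega\otimes\cdots\otimes UU_\Omega)|_{\text{Sym}^N(\mathbb{C}^2)}|0\rangle_N$, and $UU_\Omega = U_{R_U\Omega}\,h$ with $h$ in the $U(1)$-stabilizer of the reference direction, so $(h\otimes\cdots\otimes h)|_{\text{Sym}^N(\mathbb{C}^2)}$ acts on $|0\rangle_N$ by a phase, giving again $V|\Omega\rangle_N = e^{i\alpha_{U,\Omega}}|R_U\Omega\rangle_N$. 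I do not expect a genuine obstacle in either route; the only points requiring care are the bookkeeping of the phase and fixing a consistent branch of the two-valued assignment $U\mapsto R_U$, neither of which matters since the phase is discarded immediately afterwards.
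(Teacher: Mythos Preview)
Your proof is correct and follows essentially the same route as the paper: reduce to $N=1$ using $|\Omega\rangle_N=|\Omega\rangle_1^{\otimes N}$ and $V=U^{\otimes N}|_{\text{Sym}^N(\mathbb{C}^2)}$, then use the covering identity (\ref{rot}) together with the one-dimensionality of the $+1$-eigenspace of $\Omega\cdot\sigma$ to conclude $U|\Omega\rangle_1=e^{i\beta_{U,\Omega}}|R_U\Omega\rangle_1$, whence $\alpha_{U,\Omega}=N\beta_{U,\Omega}$. The alternative group-action argument you sketch is not needed here, since Appendix~\ref{appB} defines $|\Omega\rangle_N$ precisely as the tensor power of the $N=1$ coherent state.
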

\begin{proof} As is well known (see   \cite{LMV} for a summary of those properties and technical references),
$$\Omega \cdot \sigma |\Omega\rangle_1 = |\Omega\rangle_1\:.$$
Applying  $U$ to  both sides gives
$$\Omega \cdot U  \sigma U^* U |\Omega\rangle_1 = U |\Omega\rangle_1.$$
Namely, from (\ref{rot}) we obtain
$$\Omega \cdot (R^{-1}_U \sigma) U  |\Omega\rangle_1 = U |\Omega\rangle_1,$$
that is
$$(R_U\Omega) \cdot \sigma \: Ur  |\Omega\rangle_1 = U |\Omega\rangle_1\:.$$
We also know that
$$(R_U\Omega) \cdot \sigma\:  |R_U\Omega\rangle_1 =  |R_U\Omega\rangle_1\:.$$
Since the eigenspace of  $(R^{-1} \Omega)\cdot \sigma$ with eigenvalue $1$ is one-dimensional, for some real $\beta_{\Omega, U}$, we must have
$$U  |\Omega\rangle_1 = e^{i\beta_{\Omega, U}}|R_U\Omega\rangle_1\:.$$
Taking advantage of $|\Omega\rangle_N = \underbrace{|\Omega\rangle_1 \otimes \cdots \otimes |\Omega\rangle_1}_{N\: times}$ and 
$V= \underbrace{U\otimes \cdots \otimes U}_{N\: times}|_{\text{Sym}^N(\mathbb{C}^2)},$ we  immedialtey achieve  (\ref{provare}) with  $\alpha_{\Omega,U}= N\beta_{\Omega,U}$.
\end{proof}

\begin{proposition}\label{Chris} Eq. (\ref{final}) is true.
\end{proposition}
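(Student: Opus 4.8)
The plan is to reduce the statement, by linearity of $Q_{1/N}$, to the case of a single monomial, to compute $Q_{1/N}$ of that monomial times $x^2+y^2+z^2-1$ directly from the symmetrization formula (\ref{deformationqunatizaion}), and to isolate a ``macroscopic'' part controlled by the total $SU(2)$ Casimir, which on $\text{Sym}^N(\mathbb{C}^2)$ is a scalar, the rest being $O(1/N)$ in operator norm. Write $q=\sum_{\vec k}c_{\vec k}\,p_{k_1\cdots k_L}$ as a finite linear combination of monomials $p_{k_1\cdots k_L}(x,y,z)=x_{k_1}\cdots x_{k_L}$ with $k_1,\dots,k_L\in\{1,2,3\}$ and $L=\deg p_{\vec k}\le M-2$; it then suffices to estimate each term. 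Abbreviate $A_L:=\sigma_{k_1}\otimes_s\cdots\otimes_s\sigma_{k_L}$, with $\sigma_1,\sigma_2,\sigma_3$ the Pauli matrices (which are the $b_a$ appearing in (\ref{deformationqunatizaion}) for $k=2$). Since $x^2+y^2+z^2=\sum_{a=1}^3 x_a^2$, linearity of $Q_{1/N}$ and (\ref{deformationqunatizaion}) give, for $N\ge M$,
\[
Q_{1/N}\!\big(p_{k_1\cdots k_L}(x^2+y^2+z^2-1)\big)=\sum_{a=1}^3 S_{L+2,N}\big(A_L\otimes_s\sigma_a\otimes_s\sigma_a\big)-S_{L,N}(A_L),
\]
and, expanding each symmetrization as an average over placements at pairwise distinct sites of $\{1,\dots,N\}$, the first sum equals
\[
\frac{(N-L-2)!}{N!}\sum_{(i_1,\dots,i_L)\ \text{distinct}}\sigma_{k_1}^{(i_1)}\cdots\sigma_{k_L}^{(i_L)}\,\Big(\sum_{a=1}^3\sum_{\substack{p\neq q\\ p,q\notin\{i_1,\dots,i_L\}}}\sigma_a^{(p)}\sigma_a^{(q)}\Big),
\]
the two factors commuting because they act on disjoint tensor slots.

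The key step is to turn the inner bracket into a scalar up to a small error. Setting $J=\{i_1,\dots,i_L\}$, $\Sigma_a=\sum_{i=1}^N\sigma_a^{(i)}$ and $\Sigma_a^J=\sum_{r\in J}\sigma_a^{(r)}$, the identity $\sigma_a^2=\mathbf 1$ yields
\[
\sum_{a=1}^3\sum_{\substack{p\neq q\\ p,q\notin J}}\sigma_a^{(p)}\sigma_a^{(q)}=\sum_{a=1}^3(\Sigma_a-\Sigma_a^J)^2-3(N-L)\mathbf 1=\Big(\sum_{a=1}^3\Sigma_a^2\Big)-3(N-L)\mathbf 1-T_J,
\]
where $T_J:=\sum_{a=1}^3\big(\Sigma_a\Sigma_a^J+\Sigma_a^J\Sigma_a-(\Sigma_a^J)^2\big)$ obeys $\|T_J\|\le 6NL+3L^2$ (since $\|\Sigma_a\|=N$, $\|\Sigma_a^J\|=L$). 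Now $\sum_{a=1}^3\Sigma_a^2$ is the total $SU(2)$ Casimir and, on $\text{Sym}^N(\mathbb{C}^2)$ — which carries the spin-$N/2$ irreducible representation — it acts as the scalar $N(N+2)$. Hence, restricting the whole expression to $\text{Sym}^N(\mathbb{C}^2)$, commuting the scalar past the unitary $\sigma_{k_1}^{(i_1)}\cdots\sigma_{k_L}^{(i_L)}$, and using $\frac{(N-L-2)!}{N!}\sum_{(i_1,\dots,i_L)\,\text{distinct}}\sigma_{k_1}^{(i_1)}\cdots\sigma_{k_L}^{(i_L)}=\frac{1}{(N-L)(N-L-1)}\,S_{L,N}(A_L)$, I obtain
\[
\sum_{a=1}^3 S_{L+2,N}\big(A_L\otimes_s\sigma_a\otimes_s\sigma_a\big)\Big|_{\text{Sym}^N(\mathbb{C}^2)}=\frac{N(N+2)-3(N-L)}{(N-L)(N-L-1)}\,S_{L,N}(A_L)\Big|_{\text{Sym}^N(\mathbb{C}^2)}+\mathcal E_N,
\]
with $\|\mathcal E_N\|_N\le\dfrac{6NL+3L^2}{(N-L)(N-L-1)}=O(1/N)$, because $\mathcal E_N$ is the average of the operators $\sigma_{k_1}^{(i_1)}\cdots\sigma_{k_L}^{(i_L)}T_J$ over the $\tfrac{N!}{(N-L)!}$ placements, weighted by $\tfrac{(N-L-2)!}{N!}$.

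To conclude, observe that $\dfrac{N(N+2)-3(N-L)}{(N-L)(N-L-1)}-1=\dfrac{2NL+2L-L^2}{(N-L)(N-L-1)}=O(1/N)$ and that $\big\|S_{L,N}(A_L)|_{\text{Sym}^N(\mathbb{C}^2)}\big\|\le 1$ (an average of products of unitaries). Subtracting $S_{L,N}(A_L)|_{\text{Sym}^N(\mathbb{C}^2)}$ from the previous display therefore gives $\big\|Q_{1/N}(p_{k_1\cdots k_L}(x^2+y^2+z^2-1))|_{\text{Sym}^N(\mathbb{C}^2)}\big\|_N=O(1/N)\to 0$, with implied constant depending only on $L\le M-2$. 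Summing over the finitely many monomials of $q$ with the coefficients $c_{\vec k}$ then yields (\ref{final}).

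The point I expect to be delicate is the middle step: the single-site unitary $\sigma_{k_1}^{(i_1)}\cdots\sigma_{k_L}^{(i_L)}$ does \emph{not} commute with the Casimir $\sum_{a=1}^3\Sigma_a^2$ on the full space $(\mathbb{C}^2)^{N\otimes}$, so one must restrict the operator $\sum_{a=1}^3 S_{L+2,N}(A_L\otimes_s\sigma_a\otimes_s\sigma_a)$ — which does preserve $\text{Sym}^N(\mathbb{C}^2)$ — to that subspace \emph{before} replacing the Casimir by its eigenvalue $N(N+2)$; and one must notice that the finite-site correction $T_J$, though of norm of order $N$, is suppressed by the $O(N^{-2})$ normalization coming from the $L$-fold symmetrization. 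Everything else is routine bookkeeping with the maps $S_{L,N}$.
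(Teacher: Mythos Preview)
Your argument is correct. The key identity is well set up: once you place the bracket on the right of the site product $\sigma_{k_1}^{(i_1)}\cdots\sigma_{k_L}^{(i_L)}$ and act on a vector $\psi\in\text{Sym}^N(\mathbb{C}^2)$, the Casimir $\sum_a\Sigma_a^2$ hits $\psi$ first and returns the scalar $N(N+2)$, so no genuine commutation with the site product is needed. The leftover $T_J$ is controlled exactly as you say, and the algebra with $S_{L,N}$ and the prefactors checks out. You even get a quantitative $O(1/N)$ rate.

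The paper's proof follows a different route. It does not expand $Q_{1/N}\big(q(r^2-1)\big)$ directly; instead it first shows, via an explicit Dicke-basis computation (equivalent to your Casimir observation in the special case $L=0$), that $Q_{1/N}(x^2+y^2+z^2-1)|_{\text{Sym}^N(\mathbb{C}^2)}=0$ \emph{exactly} --- indeed $\sum_a S_{2,N}(\sigma_a\otimes\sigma_a)|_{\text{Sym}^N}=\frac{N(N+2)-3N}{N(N-1)}\,\mathbf 1=\mathbf 1$. Hence $Q_{1/N}(q)\,Q_{1/N}(r^2-1)|_{\text{Sym}^N}=0$ for every $q$. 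The paper then invokes the general multiplicativity-in-the-limit property $\lim_{N\to\infty}\|Q_{1/N}(fg)-Q_{1/N}(f)Q_{1/N}(g)\|_N=0$, which is part of the strict-deformation-quantization package, to conclude. So the paper trades your hands-on estimate of the ``off-diagonal'' correction $T_J$ for an appeal to existing continuous-bundle machinery; your version is more self-contained and yields an explicit rate, while the paper's is shorter once that machinery is in hand and isolates the conceptually clean fact that $Q_{1/N}(r^2-1)$ vanishes identically on the symmetric subspace.
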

\begin{proof}
We use the canonical (Dicke) basis \cite{Pe72,LMV} $|n,N-n\rangle $ for $\text{Sym}^N(\mathbb{C}^2)$ $(n=0,...,N)$ and first show that the matrix elements with respect to this basis are zero:
\begin{align}
\langle n| Q_{1/N}(q(x,y,z))Q_{1/N}(x^2+y^2+z^2-1)| k\rangle=0, \ \ (k,n=0,...,N).
\end{align}
Consider now a basis vector $|k,N-k\rangle$. We first expand $|k,N-k\rangle$ in the standard basis vectors $\beta_i$ $(i=1,...,2^N$) spanning the Hilbert space $\bigotimes^N\mathbb{C}^2$. 
We denote by $\mathscr{O}^{k}$ the orbit consisting of ${N}\choose{k}$-basis vectors $\beta_i$ with the same number of occurrence of the vectors $e_2$ and $e_1$, the two basis vectors of $\mathbb{C}^2$. By convention, we take $e_1$ such that $\sigma_3 e_1=e_1$, and $\sigma_3 e_2=-e_2$. It is not difficult to show that \cite{vandeVen,VGRL18}
\begin{align*}
|k,N-k\rangle=\frac{1}{\sqrt{{{N}\choose{k}}}}\sum_{l=1}^{{{N}\choose{k}}}\beta_{k,l}
\end{align*}
where the subindex $l$ in  $\beta_{k,l}$ labels the basis vector $\beta_{k,l}\in\beta$ within the same orbit $\mathscr{O}^{k}$. Since we have ${N}\choose{k}$ such vectors per orbit, the sum in the above equation indeed is from $l=1,...,{{N}\choose{k}}$.
By definition $Q_{1/N}(x_i^2)=S_{2,N}(\sigma_i\otimes\sigma_i)$ for $i=1,2,3$. Using a combinatorial argument and the fact that all $|k\rangle$ are symmetric it follows that
\begin{align*}
&S_{2,N}(\sigma_2\otimes\sigma_2)|k\rangle=\frac{1}{\sqrt{{{N}\choose{k}}}}\sum_{l=1}^{{{N}\choose{k}}}(\sigma_2\otimes\sigma_2\otimes 1\cdot\cdot\cdot\otimes 1)  \beta_{k,l}=&\\ &\frac{1}{\sqrt{{{N}\choose{k}}}}\bigg{(}-{{N-2}\choose{k-2}}\beta_{k-2,l}+2{{N-2}\choose{k-1}}\beta_{k,l}-{{N-2}\choose{k}}\beta_{k+2,l}.
\end{align*}
Similarly,
\begin{align*}
&S_{2,N}(\sigma_1\otimes\sigma_1)|k\rangle=&\\ &\frac{1}{\sqrt{{{N}\choose{k}}}}\bigg{(}{{N-2}\choose{k-2}}\beta_{k-2,l}+2{{N-2}\choose{k-1}}\beta_{k,l}+{{N-2}\choose{k}}\beta_{k+2,l};
\end{align*}
and
\begin{align*}
&S_{2,N}(\sigma_3\otimes\sigma_3)|k\rangle=&\\ &\frac{1}{\sqrt{{{N}\choose{k}}}}\bigg{(}{{N-2}\choose{k-2}}\beta_{k,l}-2{{N-2}\choose{k-1}}\beta_{k,l}+{{N-2}\choose{k}}\beta_{k,l}.
\end{align*}
In view of Definition \ref{def:deformationq} (property $3(b)$) the cross-section $0\to f$ and $1/N\to Q_{1/N}(f)$ defines a continuous section of the bundle implying that the following condition (see also the remark below Definition \ref{def:deformationq}) is automatically satisfied:
\begin{align}
\lim_{N\to\infty}||Q_{1/N}(f)Q_{1/N}(f)-Q_{1/N}(fg)||_N=0. \label{Klaas}
\end{align}
We apply this with $f=q(x,y,z)$ and $g(x,y,z)=x^2+y^2+z^2-1$. We first show that
\begin{align*}
 \langle n|Q_{1/N}(q(x,y,z))Q_{1/N}(x^2+y^2+z^2-1))| k\rangle=0,
\end{align*}
for all basis vectors $|n\rangle$ and $|k\rangle$ in $\text{Sym}^N(\mathbb{C}^2)$.
Indeed,  using the above identities one finds
\begin{align*}
&\langle n| Q_{1/N}(q(x,y,z)) Q_{1/N}(x^2+y^2+z^2-1) |k\rangle=&\\&
\frac{1}{\sqrt{{{N}\choose{n}}}}\frac{1}{\sqrt{{{N}\choose{k}}}}\sum_{l=1}^{{{N}\choose{n}}}\sum_{r=1}^{{{N}\choose{k}}}\langle\beta_{n,l},Q_N(q(x,y,z))\bigg{(}S_{2,N}(\sigma_1\otimes\sigma_1)+S_{2,N}(\sigma_2\otimes\sigma_2)+S_{2,N}(\sigma_3\otimes\sigma_3)\bigg{)}\beta_{k,r}\rangle -&\\&\frac{1}{\sqrt{{{N}\choose{n}}}}\frac{1}{\sqrt{{{N}\choose{k}}}}\sum_{l=1}^{{{N}\choose{n}}}\sum_{r=1}^{{{N}\choose{k}}}\langle\beta_{n,l},Q_{1/N}(q(x,y,z))\beta_{k,r}\rangle=&\\&
\frac{1}{\sqrt{{{N}\choose{n}}}}\frac{1}{\sqrt{{{N}\choose{k}}}}\sum_{l=1}^{{{N}\choose{n}}}\langle\beta_{n,l},Q_N(q(x,y,z))\bigg{(}{{N-2}\choose{k-2}}+{{N-2}\choose{k}}+2{{N-2}\choose{k-1}}-{{N}\choose{k}}\bigg{)}\beta_{k,r}\rangle
&\\&
\frac{1}{\sqrt{{{N}\choose{n}}}}\frac{1}{\sqrt{{{N}\choose{k}}}}\sum_{l=1}^{{{N}\choose{n}}}\langle\beta_{n,l},Q_N(q(x,y,z))\bigg{(}{{N}\choose{k}}-{{N}\choose{k}}\bigg{)}\beta_{k,r}\rangle=0.
\end{align*}
Since this holds for all basis vectors and $\text{Sym}^N(\mathbb{C}^2)$ is invariant under $Q_{1/N}(q(x,y,z))$ and $Q_{1/N}(x^2+y^2+z^2-1)$,
we conclude 
\begin{align}
\bigg{(}Q_{1/N}(q(x,y,z)) Q_{1/N}(x^2+y^2+z^2-1) \bigg{)}|_{\text{Sym}^N(\mathbb{C}^2)}=0. \label{vanishingmatrixelements}
\end{align}
Therefore, for any symmetric unit vector $\phi\in\text{Sym}^N(\mathbb{C}^2)$ we compute
\begin{align*}
&||Q_{1/N}(q(x,y,z) (x^2+y^2+z^2-1))\phi||_N=&\\&
\left|\left|\bigg{(}Q_{1/N}(q(x,y,z) (x^2+y^2+z^2-1))-Q_{1/N}(q(x,y,z))Q_{1/N}(x^2+y^2+z^2-1)\bigg{)}\phi\right|\right|_N&\\&
\leq ||Q_{1/N}(q(x,y,z) (x^2+y^2+z^2-1))-Q_{1/N}(q(x,y,z))Q_{1/N}(x^2+y^2+z^2-1)||_N\:.
\end{align*}
As a consequence of (\ref{Klaas}), for every $\epsilon>0$ there is $N_\epsilon$ such that 
$$||Q_{1/N}(q(x,y,z) (x^2+y^2+z^2-1))\phi||_N < \epsilon\quad \mbox{if $N> N_\epsilon$}$$
the crucial observation is that due to  (\ref{Klaas}) the number $N_\epsilon$ does not depend on the unit vector $\phi\in\text{Sym}^N(\mathbb{C}^2)$. Therefore the above bound is uniform, and
$$||Q_{1/N}(q(x,y,z) (x^2+y^2+z^2-1))|_{\text{Sym}^N(\mathbb{C}^2)}||_N $$
$$ = \sup_{||\phi||=1 \:, \phi\in\text{Sym}^N(\mathbb{C}^2)}||Q_{1/N}(q(x,y,z) (x^2+y^2+z^2-1))\phi||_N\leq  \epsilon\quad \mbox{if $N> N_\epsilon$},$$
which means
\begin{align*}
\lim_{N\to\infty}||Q_{1/N}(q(x,y,z) (x^2+y^2+z^2-1))|_{\text{Sym}^N(\mathbb{C}^2)}||_N=0.
\end{align*}
This closes the proof of the proposition.
\end{proof}

\section{Application to the quantum Curie-Weiss model}
We apply the previous theorem to the (quantum) Curie-Weiss model\footnote{This model exists in both a classical and a quantum version and is a mean-field approximation to the Ising model. See e.g. \cite{FV2017} for a mathematically rigorous treatment of the classical version, and \cite{CCIL,IL} for the quantum version. For our approach the papers \cite{Bona,DW,RW}  
played an important role. See also \cite{ABN} for a very detailed discussion of the quantum Curie--Weiss model.}, which
is an exemplary quantum mean-field spin model. We recall that the {\bf quantum Curie Weiss} defined on a lattice with $N$ sites is
  \begin{align}
h^{CW}_{1/N}: &  \underbrace{\mathbb{C}^2 \otimes \cdots  \otimes\mathbb{C}^2}_{N \: times}  \to 
\underbrace{\mathbb{C}^2 \otimes \cdots  \otimes\mathbb{C}^2}_{N \: times}; \\
h^{CW}_{1/N} &=\frac{1}{N} \left(-\frac{J}{2N} \sum_{i,j=1}^N \sigma_3(i)\sigma_3(j) -B \sum_{j=1}^N \sigma_1(j)\right).\label{CWham}
\end{align}
Here $\sigma_k(j)$ stands for $I_2 \otimes \cdots \otimes \sigma_k\otimes \cdots \otimes I_2$, where $\sigma_k$ occupies  the $j$-th slot, and 
 $J,B \in \mathbb{R}$ are given constants defining the strength of the spin-spin coupling and the (transverse) external magnetic field, respectively. Note that 
 \begin{equation}
 h^{CW}_{1/N} \in \mathrm{Sym}(M_2(\mathbb{C})^{\otimes N}),
\end{equation}
where $\mathrm{Sym}(M_2(\mathbb{C})^{\otimes N})$ is the range of the symmetrizer.
Our interest will lie in the limit $N\to\infty$. As such, we rewrite $h_{1/N}^{CW}$ as
\begin{align}
h^{CW}_{1/N}&= -\frac{J}{2N(N-1)} \sum^N_{i \neq j, \:i,j=1} \sigma_3(i)\sigma_3(j) - \frac{B}{N} \sum_{j=1}^N \sigma_1(j) + O(1/N).\nonumber \\
&= Q_{1/N}(h^{CW}_0)  + O(1/N),\label{QNh}
\end{align}
where $O(1/N)$ is meant in norm (i.e. the operator norm on each space $M_2(\mathbb{C}^2)^{\otimes N}$), and
the {\bf  classical Curie--Weiss Hamiltonian} is 
\begin{align}
h^{CW}_0: B^3 &\mapsto\mathbb{R};\\
h^{CW}_0(x,y,z)  &= -\left(\frac{J}{2}z^2 + Bx\right), \quad \mathbf{x}=(x,y,z) \in B^3, \label{hcwc}
\end{align}
where $B^3= \{ \mathbf{x} \in \mathbb{R}^3 \:|\: \|\mathbf{x}\|\leq 1\}$ is the closed unit ball in $\mathbb{R}^3$.\\
Using these observations we now show that the quantum Curie-Weiss Hamiltonian restricted to the symmetric space is asymptotically norm-equivalent also to the other quantization map $Q'_{1/N}$ applied to $h_0^{CW}|_{S^2}$.

\begin{theorem} One has
\begin{align}
\left|\left|h_{1/N}^{CW}|_{\text{Sym}^N(\mathbb{C}^2)}  - Q'_{1/N}(h_0^{CW}|_{S^2})\right|\right|_N \to 0 
\quad \text{for N $\to \infty$}.
\end{align}
\end{theorem}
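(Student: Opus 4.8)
The plan is to obtain this statement as an immediate corollary of Theorem \ref{MAIN} together with the operator-norm expansion (\ref{QNh}). The first observation is that the classical Curie--Weiss Hamiltonian $h_0^{CW}$ defined in (\ref{hcwc}) is a polynomial of degree two in the coordinates $x,y,z$, and hence belongs to $\tilde{A}_0$. Therefore Theorem \ref{MAIN} applies with $p=h_0^{CW}$ and yields
$$\left\|Q_{1/N}(h_0^{CW})|_{\text{Sym}^N(\mathbb{C}^2)}-Q'_{1/N}(h_0^{CW}|_{S^2})\right\|_N\longrightarrow 0\qquad\text{as }N\to\infty.$$

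Next I invoke (\ref{QNh}), which asserts $h_{1/N}^{CW}=Q_{1/N}(h_0^{CW})+O(1/N)$ in the operator norm of $M_2(\mathbb{C})^{\otimes N}$. Since $h_{1/N}^{CW}\in\mathrm{Sym}(M_2(\mathbb{C})^{\otimes N})$ and $Q_{1/N}(h_0^{CW})$ are both operators leaving the closed invariant subspace $\text{Sym}^N(\mathbb{C}^2)$ invariant, and since restricting an operator to a closed invariant subspace cannot increase its operator norm, we get
$$\left\|h_{1/N}^{CW}|_{\text{Sym}^N(\mathbb{C}^2)}-Q_{1/N}(h_0^{CW})|_{\text{Sym}^N(\mathbb{C}^2)}\right\|_N\le\left\|h_{1/N}^{CW}-Q_{1/N}(h_0^{CW})\right\|_N=O(1/N)\longrightarrow 0.$$
Adding and subtracting $Q_{1/N}(h_0^{CW})|_{\text{Sym}^N(\mathbb{C}^2)}$ and applying the triangle inequality to $h_{1/N}^{CW}|_{\text{Sym}^N(\mathbb{C}^2)}-Q'_{1/N}(h_0^{CW}|_{S^2})$ then combines the two displayed limits and proves the theorem.

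There is essentially no obstacle here beyond bookkeeping; the only step worth double-checking is that the $O(1/N)$ in (\ref{QNh}) is a bona fide operator-norm estimate on the whole tensor product, so that it survives restriction to $\text{Sym}^N(\mathbb{C}^2)$. This is how it was established: the difference $h_{1/N}^{CW}-Q_{1/N}(h_0^{CW})$ consists of the diagonal ($i=j$) contributions to $\frac{1}{N^2}\sum_{i,j}\sigma_3(i)\sigma_3(j)$, which amount to a multiple of the identity of norm $O(1/N)$, together with the mismatch between the normalizing factors $1/N^2$ and $1/(N(N-1))$ in the off-diagonal sum $\sum_{i\neq j}\sigma_3(i)\sigma_3(j)$, whose $N(N-1)$ terms each have norm one and hence contribute $O(1/N)$ in total, while the linear part matches $-B\,Q_{1/N}(x)$ exactly. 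Alternatively one may simply cite (\ref{QNh}) verbatim and skip this verification.
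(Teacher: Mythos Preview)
Your proof is correct and follows essentially the same approach as the paper: both invoke (\ref{QNh}) to replace $h_{1/N}^{CW}$ by $Q_{1/N}(h_0^{CW})$ up to an $O(1/N)$ error, then apply Theorem \ref{MAIN} to $p=h_0^{CW}\in\tilde{A}_0$ and conclude via the triangle inequality. Your write-up is in fact more careful than the paper's (you make explicit that restriction to an invariant subspace does not increase the norm, and you spell out the source of the $O(1/N)$ term), but the underlying argument is identical.
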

$\null$\\

\begin{proof} Using \eqref{QNh} and Theorem \eqref{MAIN},
\begin{align}
&\left|\left|h_{1/N}^{CW}|_{\text{Sym}^N(\mathbb{C}^2)}  - Q'_{1/N}(h_0^{CW}|_{S^2})\right|\right|_N\nonumber\leq &\\&  \left|\left|Q_{1/N}(h_0^{CW})|_{\text{Sym}^N(\mathbb{C}^2)}+O(\frac{1}{N})|_{\text{Sym}^N(\mathbb{C}^2)}- Q'_{1/N}(h_0^{CW}|_{S^2})\right|\right|_N\nonumber\leq &\\&  \left|\left|Q_{1/N}(h_0^{CW})|_{\text{Sym}^N(\mathbb{C}^2)} - Q'_{1/N}(h_0^{CW}|_{S^2})\right|\right|_N\to 0 \  \ (\text{as \  $N\to\infty$}).
\end{align}
\end{proof}
This in particular establishes a link between the (compressed) quantum Curie-Weiss spin Hamiltonian and its classical counterpart on the sphere.
\section*{Acknowledgments} The authors are grateful to Riccardo Ghiloni for useful technical discussions and to Klaas Landsman for its comments and suggestions. Christiaan van de Ven is Marie Sk\l odowska-Curie fellow of the Istituto Nazionale di Alta Matematica and is funded by the INdAM Doctoral Programme in Mathematics and/or Applications co-funded by Marie 
Sk\l odowska-Curie Actions, INdAM-DP-COFUND-2015, grant number 713485. 
\appendix 
\section{Continuous bundle of $C^*$-algebras}
For any unital $C^*$-algebra $B$ the following fibers may be turned into a continuous bundle of $C^*$-algebras over the base space $I=\{0\}\cup 1/\mathbb{N}\subset[0,1]$ (with relative topology, so that $(1/N)\to 0$ as $N\to\infty$):
 \begin{align}
 A_0&=C(S(B)); \label{A0}\\
 A_{1/N}&= {B}^{\otimes N}.\label{AN}
   \end{align}
Here $S(B)$ is the (algebraic) state space of $B$ equipped with the weak$\mbox{}^*$-topology (in which it is a compact convex set,
e.g.\ the three-ball $S(M_2(\bC))\cong B^3\subset\bR^3$),
 and ${B}^{\otimes N}$ is the $N$th tensor power of $B$ also denoted by $B^N$ in what follows).\footnote{Although this is irrelevant for our main application $B=M_k(\bC)$, for general $C^*$-algebras $B$ one should equip $B^N$ with the {\em minimal} $C^*$-norm $\|\:\:\:\|_N$ \cite{Takesaki,Lan17}.} As in the case of vector bundles, the continuity structure of a bundle of $C^*$-algebras may be defined (indirectly) by specifying what the continuous cross-sections are.
To do so for \eqref{A0} - \eqref{AN}, we need the {\em symmetrization operator} $S_N : {B}^N \to {B}^N$, defined as the unique linear continuous extension of the following map on elementary tensors:
\begin{align}S_N (a_1 \otimes \cdots \otimes a_N) := \frac{1}{N!} \sum_{\sigma \in {\cal P}(N)} a_{\sigma(1)} \otimes \cdots \otimes a_{\sigma(N)}. \label{defSN}
\end{align}
Furthermore, for $N\geq M$ we need to generalize the definition of $S_N$ to give a bounded  operator  $S_{M,N}: {B}^M 	\to {B}^N$, defined by linear and continuous extension of  
\begin{align} S_{M,N}(b) := S_N(b \otimes \underbrace{I \otimes \cdots \otimes I }_{N-M \mbox{\scriptsize times}}),\quad b \in B^{\otimes M}. \label{defSMN}
\end{align}
We write cross-sections $a$ of \eqref{A0} - \eqref{AN} as sequences 
$(a_0,a_{1/N})_{N\in\bN}$, where $a(0)=a_0$ etc. Following \cite{RW}, the part of the cross-section
 $(a_{1/N})_{N\in\bN}$ away from zero (i.e.\ with $a_0$ omitted) is called {\bf symmetric}
if there exist $M \in \mathbb{N}$ and $a_{1/M} \in {B}^{\otimes M}$ such that 
\begin{align}\label{oneone}
    a_{1/N} = S_{M,N}(a_{1/M})\:\mbox{for all }  N\geq M,
\end{align}
and {\bf quasi-symmetric} if  $a_{1/N} = S_{N}(a_{1/N})$ if $N\in \mathbb{N}$,
and for every $\epsilon > 0$, there is a symmetric sequence $(b_{1/N})_{N\in\mathbb{N}}$
as well as  $M \in\mathbb{N}$ (both depending on $\epsilon$) such that 
\begin{align} \| a_{1/N}-b_{1/N}\| < \epsilon\: \mbox{ for all } N > M.
\end{align} 
The continuous cross-sections of the bundle  \eqref{A0} - \eqref{AN}, then, are the  sequences 
$(a_0,a_{1/N})_{N\in\bN}$ for which the part  $(a_{1/N})_{N\in\bN}$ away from zero is quasi-symmetric and \begin{align}
a_0(\omega)=\lim_{N\to\infty}\omega^N(a_{1/N})\label{8.46},
\end{align}
where $\omega\in S({B})$, and $\omega^N=  \underbrace{\omega\otimes \cdots \otimes \omega}_{N\: \mbox{\scriptsize times}} \in S({B}^{\otimes N})$,
is  the unique (norm) continuous linear extension of the following map that is defined on elementary tensors:
\begin{align}
\omega^N(b_1\otimes\cdot\cdot\cdot\otimes b_N)=\omega(b_1)\cdot\cdot\cdot\omega(b_N).\label{linearstateextension}
\end{align}
The limit in \eqref{8.46} exists provided  $(a_{1/N})_{N\in\bN}$ is quasi-symmetric (as we assume), and by \cite[Theorem 8.4]{Lan17}, this choice of continuous cross-sections uniquely defines (or identifies) 
 a continuous bundle of $C^*$-algebras over  $I$ in \eqref{defI} with fibers \eqref{A0} - \eqref{AN}.

\section{Coherent spin states}\label{appB}
 If $|\!\uparrow\rangle, |\!\downarrow\rangle$ are the eigenvectors of $\sigma_3$ in $\mathbb{C}^2$, so that
$\sigma_3|\!\uparrow\rangle=|\!\uparrow\rangle$ and $\sigma_3|\!\downarrow\rangle=- |\!\downarrow\rangle$, and where  $\Omega \in {S}^2$, with  polar angles  
$\theta_\Omega \in (0,\pi)$, $\phi_\Omega \in (-\pi, \pi)$, we then define the unit vector
\begin{align}\label{om1}
|\Omega\rangle_1  = \cos \frac{\theta_\Omega}{2} |\!\uparrow\rangle + e^{i\phi_\Omega}\sin   \frac{\theta_\Omega}{2} |\!\downarrow\rangle.
\end{align}
 If $N \in \mathbb{N}$, the associated {\bf $N$-coherent spin state} $|\Omega\rangle_N\in \mathrm{Sym}^N(\mathbb{C}^2)$, equipped with the usual scalar product $\langle \cdot ,\cdot \rangle_N$ inherited from    
$(\mathbb{C}^2)^N$,  is defined as follows \cite{Pe72}:
\begin{align}\label{om2}
|\Omega\rangle_N =  \underbrace{|\Omega\rangle_1 \otimes \cdots \otimes |\Omega \rangle_1}_{N \: times}.
\end{align}
An important property relevant for our computations was established in \cite{LMV} 
\begin{align}
f(\Omega')=\lim_{N\to\infty}\frac{N+1}{4\pi}\int_{S^2}d\Omega f(\Omega')|\langle \Omega,\Omega' \rangle_N|^2, \ \ (f\in C(S^2))\:.\label{Berezinprop}
\end{align}

\end{document}